\renewcommand\cite\citep
\newtheorem{theorem}{Theorem}
\newtheorem{proposition}{Proposition}
\newtheorem{remark}{Remark}
\newcommand\gsgn{\mathrm{gsgn}}
\newcommand\diag{\mathrm{diag}}
\newcommand\proj{\mathrm{proj}}
\newcommand\fb{\bar{f}}
\newcommand\uh{\hat{u}}
\newcommand{\customfootnotetext}[2]{{
  \renewcommand{\thefootnote}{#1}
  \footnotetext[0]{#2}}}
\begin{document}

\title{Nonsmooth Quasistatic Modeling of Hydraulic Actuators}

\author{Ryo Kikuuwe\footnotemark[1]{\ }\footnotemark[4], Tomofumi Okada\footnotemark[2]{\ }\footnotemark[3], Hideo Yoshihara\footnotemark[2], \\ Takayuki Doi\footnotemark[3], Takao Nanjo\footnotemark[3]{ } {\it and} Koji Yamashita\footnotemark[3]%
}
\maketitle%

\customfootnotetext{$\star$}{Graduate School of Advanced Science and Engineering, Hiroshima University, 1-4-1 Kagamiyama, Higashi-Hiroshima, Hiroshima 739-8527, Japan. e-mail: kikuuwe@ieee.org}
\customfootnotetext{\dag}{Kobelco Construction Machinery Dream-Driven Co-Creation Research Center, Hiroshima University, 1-4-1 Kagamiyama, Higashi-Hiroshima, Hiroshima 739-8527, Japan. }
\customfootnotetext{\ddag}{Kobelco Construction Machinery Co., Ltd., 2-2-1 Itsukaichiko, Saeki-ku, Hiroshima 731-5161, Japan.}
\customfootnotetext{$\mathsection$}{Corresponding Author}

\begin{abstract}
This article presents a quasistatic model of a hydraulic actuator driven by a four-valve independent metering circuit. The presented model describes the quasistatic balance between the velocity and force and that between the flowrate and the pressure. In such balanced states, the pressure difference across each valve determines the oil flowrate through the valve, the oil flowrate into the actuator determines the velocity of the actuator, and the pressures in the actuator chambers are algebraically related to the external force acting on the actuator. Based on these relations, we derive a set of quasistatic representations, which analytically relates the control valve openings, the actuator velocity, and the external force. This analytical expression is written with nonsmooth functions, of which the return values are set-valued instead of single-valued. We also show a method of incorporating the obtained nonsmooth quasistatic model into multibody simulators, in which a virtual viscoelastic element is used to mimic transient responses. In addition, the proposed model is extended to include a regeneration pipeline and to deal with a collection of actuators driven by a single pump.
\end{abstract}
\begin{quote}\small
{\it Keywords:} nonsmooth hydraulics, differential-algebraic relaxation, hydraulic cylinders
\end{quote}

\section{Introduction}

\newcommand\tmpfootnoteA{\footnote{We use the term `quasistatic' because the model does not involve the pressure dynamics but involves the motion of the actuator and the oil.
}}

Control technology for construction machines requires continuing research and development for future applications, such as remote and semi-automatic operation. The productivity of research and development heavily depends on simulation techniques. In particular, hydraulic actuators and hydraulic circuits are important components of construction machines, of which the physical behaviors need to be appropriately modeled in simulators. The behavior of a hydraulic actuator is highly involving, depending on the oil supply from the pump, the external forces acting on the actuator, and the states and the characteristics of many valves in the circuit. 

\par

Many of the previous studies on the modeling of hydraulic systems assume that the pressure is governed by first-order dynamics. The circuit is often divided into several oil volumes, such as those in the circuit pipelines and the actuator chambers. In each volume, the rate-of-change of the pressure is determined by the oil flowrates in and out of the volume. Such an approach, sometimes referred to as a lumped fluid approach, has been employed for the controller design \cite{YaoJ_2014_Hydraulic,Christofori_2015_Model,Ruderman_2017_Full,Destro_2018_Valves} and for simulation purposes~\cite{Ylinen_2014_Hydraulic,Sakai_2018_Nondim}. Coupling of hydraulic systems and multibody systems have also been studied~\cite{Ylinen_2014_Hydraulic,Rahikainen_2018_LumpedFluid,Rahikainen_2018_Efficient,Rahikainen_2020_Cosimulation}.

\par

In the conventional model of the pressure dynamics, the rate-of-change of the pressure is proportional to the bulk modulus divided by the volume of the oil. The bulk modulus of the oil is usually high, and the oil volume may be small when one needs to deal with a small segment in the pipes and when the piston approaches either end of the cylinder. Therefore, the differential equations representing the pressure dynamics can become numerically stiff, demanding a small timestep size and a high computational cost for use in simulation. Some researchers \cite{WangL_2012_Singular,Rahikainen_2018_Efficient,KianiOshtorjani_2019_Hydraulic} applied the singular perturbation theory to avoid the numerical stiffness of the governing differential equations of the pressure. In the singular perturbation approach, the pressure dynamics is assumed to be so fast that the steady-state pressures are quickly achieved. Following this notion, the pressure rate-of-change in the governing differential equation is replaced by zero, and the pressure dynamics is converted into an algebraic constraint determining the steady-state pressure. Kiani Oshtorjani~et~al.~\cite{KianiOshtorjani_2019_Hydraulic} have explored the applicability of this scheme to distinguish which pressure rates-of-change can or cannot be zeroed based on the exponential stability of the original differential equation of the pressure dynamics.

\par

\begin{figure}[t!]\begin{center}
\includegraphics[scale=1.2]{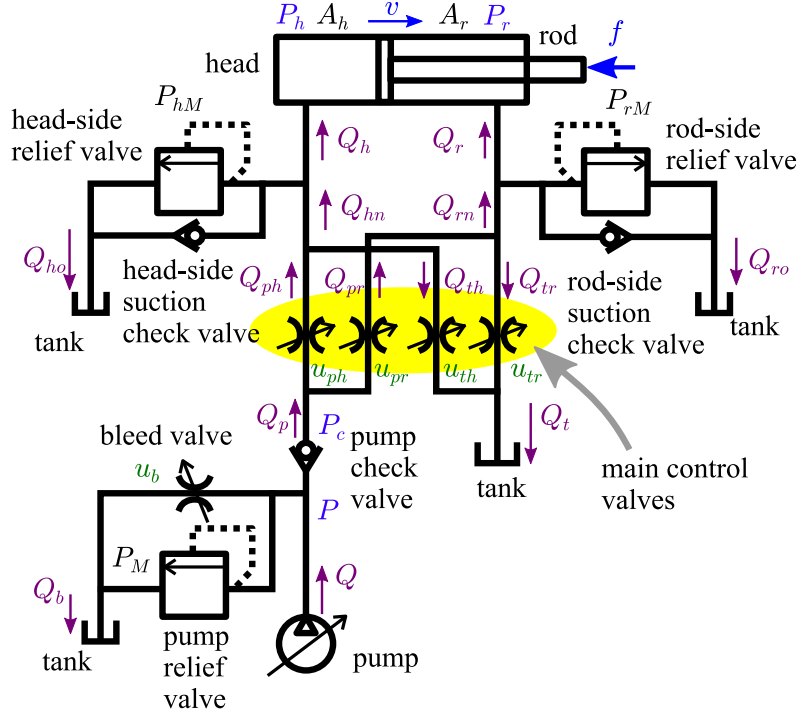}\\[-4pt]
\caption{Hydraulic actuator and its circuit.}\label{fg:circuit}
\end{center}\end{figure}

This article proposes a computationally-efficient modeling scheme for hydraulic actuators, particularly focusing on the circuit structure shown in Fig.~\ref{fg:circuit}. This approach focuses on the quasistatic\tmpfootnoteA{} balances established at the steady state, at which the pressure difference across each valve determines the oil flowrate through the valve, the oil flowrate into the actuator determines the actuator's velocity, the pressures in the actuator chambers determine the force generated by the actuator, and the generated force equals the external force acting on the actuator. This article elaborately derives an analytical expression of the quasistatic relation between the actuator velocity and the external force from the quasistatic representations of all valves in the circuit. One important feature of the presented expression is that it is nonsmooth, allowing for the set-valuedness of the pressure especially when the valves are closed or the actuator is stopped.


\par

The presented approach may be viewed as a full application of the singular perturbation theory in the sense that it provides an approximate solution by neglecting the first-order pressure dynamics. In this approach, the pressures and flowrates are constrained algebraically, instead of through the dynamics. In contrast to previous works \cite{WangL_2012_Singular,Rahikainen_2018_Efficient,KianiOshtorjani_2019_Hydraulic}, the presented nonsmooth model allows for multiple steady states by involving the set-valuedness, and describes the quasistatic balance between the chamber pressures and the external force, to which the system should converge in the steady state.

\par

In addition to the quasistatic model of Fig.~\ref{fg:circuit}, this article presents a method to use the model in multibody dynamics simulation. In simulations, the actuator model is connected with other mechanical components through a virtual viscoelastic element to deal with the set-valuedness of the actuator model. This approach is what Kikuuwe~\cite{Kikuuwe_2018_VSS} has been referring to as a {\it differential-algebraic relaxation}, which has been applied to many nonsmooth problems by Kikuuwe and his colleagues~\cite{Kikuuwe_2006_TRO,Kikuuwe_2010_TRO,Kikuuwe_2014_TMECH,Xiong_2013_JAMH,Xiong_2013_Tribo,Kikuuwe_2019_MPE}. Some extensions toward more complicated circuit structures are also presented.

\par

This article is organized as follows. Section~\ref{sc:prm} shows mathematical preliminaries, including definitions of relevant functions and some theorems and propositions. Section~\ref{sc:stt} constructs a nonsmooth quasistatic model of the hydraulic circuit of Fig.~\ref{fg:circuit}, which describes the algebraic relation among the valve openings, the rod velocity, and the external force. Section~\ref{sc:dyn} presents an approach to incorporate the quasistatic model into multibody simulators. Section~\ref{sc:reg} presents an extended model including a regeneration pipeline, and Section~\ref{sc:mul} presents a model including multiple actuators driven by a single pump. Section~\ref{sc:ccl} provides some concluding remarks.

\section{Mathematical Preliminary}\label{sc:prm}
\subsection{Some Nonsmooth Functions}

In this article, $\bbR$ denotes the set of all real numbers. This article extensively uses mathematical notations of the nonsmooth system theory, which involves set-valued functions. We use the following set-valued functions:
\beqa
\calN_{[A,B]}(x) &\define& \lb\{\barr{ll}
 [0,\infty) & \mbox{if}\ x= B \\
0 & \mbox{if}\ x\in(A,B) \\
 (-\infty,0] & \mbox{if}\ x=A \\
\emptyset & \mbox{otherwise} \earr\rb.  \dfeq{defcalN}
\\
\gsgn(a,x,b)&\define&\lb\{\barr{ll} b & \mbox{if}\ x>0 \\ \mathrm{cl}\{a,b\} &\mbox{if}\ x=0 \\ a &\mbox{if}\ x<0. \earr\rb.
\eeqa
The definition \eq{defcalN} assumes $A<B$. Here, $\mathrm{cl}\{a,b\}$ stands for the convex closure of the set $\{a,b\}$, being the closed set $[a,b]$ if $a\leq b$ and $[b,a]$ if $b\leq a$. The function $\calN_{\calF}(x)$ is referred to as the normal cone of the set $\calF$ at the point $x$. The function $\gsgn$ can be seen as a generalized version of the set-valued signum function. With the normal cone $\calN$, the following relation holds true:
\beqa
 0\leq x \perp y\geq 0  \ \iff\ 
 x\in-\calN_{[0,\infty)}(y)\ \iff\  
 y\in-\calN_{[0,\infty)}(x) .
\eeqa
Each of the above three expressions means that $x$ and $y$ are non-positive scalars at least one of which is zero. This relation is convenient to describe the flowrate-pressure relation at check valves. In addition, with $A<B$, the following relation holds true:
\beqa
y \in\calN_{[A,B]}(x) \ \iff\ x\in\gsgn(A,y,B). \dfeq{calNgsgn}
\eeqa

\par

The following function represents the projection onto a closed set:
\beqa
\proj_{[A,B]}(x) &\define& \max(A,\min(B,x)) \dfeq{defproj}
\eeqa
where $A<B$. The normal cone and the projection have the following relation:
\beqa
a-x\in\calN_{\calA}(x) \ \iff\ x=\proj_{\calA}(a), \dfeq{normproj}
\eeqa
which has been shown in previuos publications \cite[Proposition~2]{Brogliato_2006_Equivalence}\cite[Section~A.3]{Acary_2008_Numerical}. This article uses the following theorem:
\begin{theorem}\label{thm:mainp2}
Let $x\in\bbR$ and let $\calA\subset\bbR$ be a closed subset of $\bbR$. Let $f:\bbR\to\bbR$ be a strictly decreasing function. Let $x_f\in\bbR$ satisfy $f(x_f)=0$. Then, the following statement holds true: 
\beqa
f(x)\in\calN_{\calA}(x)\ \iff\ x=\proj_{\calA}(x_f).
\eeqa
\end{theorem}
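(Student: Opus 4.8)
The plan is to reduce the claimed equivalence to the already-known identity \eq{normproj}. By that identity, $x = \proj_{\calA}(x_f)$ holds if and only if $x_f - x \in \calN_{\calA}(x)$, so the theorem follows once I establish that $f(x)$ and $x_f - x$ are interchangeable as far as membership in $\calN_{\calA}(x)$ is concerned, i.e.
\[
 f(x) \in \calN_{\calA}(x) \ \iff\ x_f - x \in \calN_{\calA}(x) .
\]

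First I would record two elementary facts about $\calN_{\calA}(x)$ valid for every closed $\calA$ and every $x$: it is a cone, that is, it is closed under multiplication by nonnegative scalars and contains $0$ whenever it is nonempty (as \eq{defcalN} illustrates in the interval case). It follows that if a real number $r$ lies in $\calN_{\calA}(x)$ then so does every real number $s$ having the same sign as $r$: when $r \neq 0$ write $s = (s/r)\,r$ with $s/r > 0$, and when $r = 0$ simply $s = r$. Hence membership of a real number in $\calN_{\calA}(x)$ depends only on its sign. Next I would compare the signs of $f(x)$ and $x_f - x$ using that $f$ is strictly decreasing with $f(x_f) = 0$: for $x < x_f$ one has $f(x) > 0$ and $x_f - x > 0$; for $x = x_f$ one has $f(x) = 0 = x_f - x$; and for $x > x_f$ one has $f(x) < 0$ and $x_f - x < 0$. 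So $f(x)$ and $x_f - x$ always carry the same sign, whence by the preceding observation one belongs to $\calN_{\calA}(x)$ exactly when the other does. Combining this with \eq{normproj} proves the statement.

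I do not expect a genuine obstacle here: the whole content is the sign comparison above together with the fact that a cone in $\bbR$ cannot separate two numbers of equal sign. The only points deserving a line of care are the degenerate case $x = x_f$, where both candidate elements are literally $0$ and the equivalence is immediate, and---if one wishes to read the statement for a possibly nonconvex closed $\calA$---the remark that the proof uses only that $\calN_{\calA}(x)$ is a cone and that \eq{normproj} holds, both of which are granted.
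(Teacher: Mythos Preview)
Your proof is correct and follows essentially the same route as the paper: reduce to the equivalence $f(x)\in\calN_{\calA}(x)\iff x_f-x\in\calN_{\calA}(x)$ by observing that $f(x)$ and $x_f-x$ always share the same sign, then invoke \eq{normproj}. Your version is a bit more explicit about the cone property of $\calN_{\calA}(x)$, but the argument is the same.
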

\begin{proof}
Because $f$ is strictly decreasing, $x<x_f \iff f(x)<0$, $x>x_f \iff f(x)>0$, and $x=x_f \iff f(x)=0$ are satisfied. This means that $f(x)\in\calN_{\calA}(x) \iff x_f-x \in\calN_{\calA}(x)$ and \eq{normproj} implies that it is equivalent to $x=\proj_{\calA}(x_f)$.
\end{proof}

\par

\subsection{Some Smooth Functions}\label{ss:ssf}

The following single-valued functions are used in the article:
\beqa
\calS(x) &\define& \sgn(x)x^2 \\
\calR(x) &\define& \sgn(x)\sqrt{|x|}\\
\psi(u_1,u_2)&\define& \lb\{\barr{ll} 0 & \mbox{if }\ u_1=u_2=0 \\ \dfrac{u_1^2u_2^2}{u_1^2+u_2^2} &\mbox{otherwise.} \earr\rb.
\eeqa
The functions $\calR$ and $\calS$ are strictly increasing continuous functions satisfying $\calR(\calS(x))=\calS(\calR(x))=x$. The function $\psi$ is differentiable everywhere.

\par

Section~\ref{sc:dyn} will use the following functions:
\beqa
\Phi_A(b,c,a)&\define& -\sgn(c)\dfrac{\sqrt{a^2b^2+4a|c|}-ab}{2}
\\
\Phi_B(b,c,a_0,a_1,x_1)&\define& \lb\{\barr{lr}
-\dfrac{\sqrt{a_0^2(a_1b+2x_1)^2+4a_0(a_0+a_1)(a_1c-x_1^2)}-a_0(a_1b+2x_1)}{2(a_0+a_1)}
\hspace{-12cm}&\\[12pt]&\mbox{if }
\lb( -a_0(bx+c)\leq \calS(x_1)\leq 0\rb)\,\vee\,\lb(0 \leq \calS(x_1)\leq a_1c \rb)
\\[12pt]
\dfrac{\sqrt{a_0^2(a_1b-2x_1)^2-4a_0(a_0+a_1)(a_1c+x_1^2)}-a_0(a_1b-2x_1)}{2(a_0+a_1)}
\hspace{-8cm}&\\[12pt]&\mbox{if }
\lb(0\leq \calS(x_1)\leq -a_0(bx+c) \rb)\,\vee\,\lb(a_1c \leq \calS(x_1)\leq0 \rb)
\\[12pt]
-\dfrac{2\sqrt{a_0}(a_1c+x_1^2)}{\sqrt{a_0}(a_1b-2x_1)+\sqrt{a_0(a_1b-2x_1)^2+4(a_1-a_0)(a_1c+x_1^2)}}
\hspace{-12cm}&\\[12pt]&\mbox{if }
\calS(x_1)\leq\min(0,-a_0(bx+c),a_1c)  
\\[12pt]
\dfrac{2\sqrt{a_0}(x_1^2-a_1c)}{\sqrt{a_0}(a_1b+2x_1)+\sqrt{a_0(a_1b+2x_1)^2+4(a_1-a_0)(x_1^2-a_1c)}}
\hspace{-16cm}&\\[12pt]&\mbox{if }
\max(0,-a_0(bx_1+c),a_1c)\leq\calS(x_1).
\earr\rb.
\eeqa
With these functions, we have the following propositions:
\begin{proposition}\label{prp:PhiA}
Let $a\geq 0$ and $b>0$. Then, the following statement holds true:
\beqa
x=\Phi_A(b,c,a)\ \iff\ \lim_{\tilde{a}\to a}\dfrac{\calS(x)}{\tilde{a}}+bx+c=0.
\eeqa
\end{proposition}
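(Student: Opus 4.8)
\noindent\emph{Proof sketch.} The plan is to split into the cases $a>0$ and $a=0$, with the former carrying essentially all of the content; in both cases the strategy is to identify $\Phi_A(b,c,a)$ as the unique root of the reduced equation and to obtain both implications of the equivalence at once from that uniqueness.

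Suppose first $a>0$. Then $\tilde a\to a$ stays bounded away from $0$, so the right-hand condition is just the scalar equation $\calS(x)/a+bx+c=0$. The map $x\mapsto\calS(x)/a+bx+c$ is continuous and, because $\calS$ is strictly increasing (Section~\ref{ss:ssf}) and $a>0$, $b>0$, it is a strictly increasing bijection $\bbR\to\bbR$; hence the equation has a unique root $x^\star$. Consequently the right-hand condition holds for a given $x$ if and only if $x=x^\star$, and it remains only to check $x^\star=\Phi_A(b,c,a)$.

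For that check I would first pin down $\sgn(\Phi_A(b,c,a))$. Writing $D\define\sqrt{a^2b^2+4a|c|}$, one has $D\ge ab$ with equality exactly when $c=0$, since $4a|c|\ge0$ and $a,b>0$; hence $\Phi_A(b,c,a)=\sgn(c)(ab-D)/2$ has sign $-\sgn(c)$ and equals $0$ when $c=0$. Therefore $\calS(\Phi_A(b,c,a))=-\sgn(c)\,\Phi_A(b,c,a)^2$, and substituting the identity $\Phi_A(b,c,a)^2=a(ab^2-bD+2|c|)/2$ into $\calS(x)/a+bx+c$ collapses it, after cancellation, to $c-\sgn(c)|c|$, which is $0$. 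This substitution is the only genuine computation in the proof, and the step I expect to be most error-prone is keeping the $\sgn(c)$ and $|c|$ terms consistent across the subcases $c>0$, $c<0$, $c=0$.

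Finally, if $a=0$ then $\Phi_A(b,c,0)=0$, while on the right-hand side the term $\calS(x)/\tilde a$ diverges as $\tilde a\downarrow0$ unless $\calS(x)=0$, i.e.\ unless $x=0$; so the right-hand condition again forces $x=0=\Phi_A(b,c,0)$. I would dispatch this degenerate case directly. The only delicate points, beyond the sign bookkeeping above, are this limiting case and the precise reading of the one-sided limit $\tilde a\downarrow0$.
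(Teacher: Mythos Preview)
The paper does not actually give a proof; it only remarks that the result follows from ``tedious but straightforward derivations.'' Your explicit verification for $a>0$ --- establishing uniqueness via strict monotonicity of $x\mapsto\calS(x)/a+bx+c$ and then checking directly that $\Phi_A(b,c,a)$ satisfies the equation --- is correct and is precisely the kind of computation the paper has in mind.

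The $a=0$ case, however, has a real gap. You argue only one implication: if the right-hand condition holds then $x=0$. The reverse implication is never checked, and under a literal reading of the limit it \emph{fails} whenever $c\neq 0$: taking $x=\Phi_A(b,c,0)=0$ gives $\calS(0)/\tilde a\equiv 0$ for every $\tilde a>0$, so the limit on the right equals $c$, not $0$. Thus for $a=0$ and $c\neq 0$ the left-hand side holds at $x=0$ while the right-hand side holds for no $x$ at all, and the biconditional is false as stated. This is an imprecision in the proposition's wording rather than a defect in your approach --- what the paper actually needs (and what the limit notation is presumably meant to encode) is that $\Phi_A(b,c,0)=\lim_{a\downarrow 0}\Phi_A(b,c,a)=0$, i.e.\ the root of the equation tends to $0$ as $a\downarrow 0$. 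Rather than ``dispatch this degenerate case directly,'' you should flag the issue and either restrict the biconditional to $a>0$ with a separate continuity remark, or rephrase the $a=0$ condition accordingly.
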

\begin{proposition}\label{prp:PhiB}
Let $a_0\geq 0$, $a_1\geq 0$, and $b>0$, and let $a_0^2+a_1^2>0$ be satisfied. Then, the following statement holds true:
\beqa
x=\Phi_B(b,c,a_0,a_1,x_1)\ \iff\ \lim_{\tilde{a}_0\to a_0}\dfrac{\calS(x)}{\tilde{a}_0}+\lim_{\tilde{a}_1\to a_1}\dfrac{\calS(x-x_1)}{\tilde{a}_1}+bx+c= 0.
\eeqa
\end{proposition}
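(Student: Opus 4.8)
My plan is to dispose of the generic case $a_0>0,\ a_1>0$ first; there the two limits are trivial, $\lim_{\tilde a_i\to a_i}(\,\cdot\,)/\tilde a_i=(\,\cdot\,)/a_i$, and multiplying the asserted equation by $a_0a_1>0$ turns it into $g(x)=0$, where
\[
g(x):=a_1\calS(x)+a_0\calS(x-x_1)+a_0a_1(bx+c).
\]
Since $\calS$ is strictly increasing (Section~\ref{ss:ssf}) and $b>0$, the map $g$ is continuous, strictly increasing, and unbounded above and below, hence has a unique zero $x^*$; the proposition then amounts to proving $x^*=\Phi_B(b,c,a_0,a_1,x_1)$.

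The next step is to use the piecewise-quadratic structure of $g$, whose knots are $0$ and $x_1$: on each of the (at most three) intervals $(-\infty,\min(0,x_1)]$, $[\min(0,x_1),\max(0,x_1)]$, and $[\max(0,x_1),\infty)$ the arguments $x$ and $x-x_1$ keep constant signs, so there $g(x)=\alpha x^2+\beta x+\gamma$ with $\alpha\in\{\pm(a_0+a_1),\ \pm(a_0-a_1)\}$. Which interval contains $x^*$ is read off from the signs of $g$ at the knots, which I would compute explicitly as $g(0)=a_0(a_1c-\calS(x_1))$ and $g(x_1)=a_1(\calS(x_1)+a_0(bx_1+c))$; monotonicity of $g$ gives $x^*\le 0\iff g(0)\ge 0$ and $x^*\le x_1\iff g(x_1)\ge 0$. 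A short case split on $\sgn(x_1)$ then shows that the four conditions in the definition of $\Phi_B$ (where the ``$bx$'' in three of them should be read as ``$bx_1$'') are mutually exhaustive, and that each is equivalent to a statement of the form ``$x^*$ lies in a prescribed one of the three intervals, with prescribed signs of $\calS(x)$ and $\calS(x-x_1)$'': the first two conditions place $x^*$ in the outer intervals, the last two in the middle interval (one branch per sign of $x_1$), with harmless overlaps on the boundary sets $g(0)=0$ and $g(x_1)=0$.

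It then remains, case by case, to solve the governing quadratic $\alpha x^2+\beta x+\gamma=0$ and to verify that its root lying in the prescribed interval equals the displayed expression; the square roots are automatically real there because the relevant condition already forces $g$ to vanish in the corresponding interval. In the two outer cases $\alpha=\pm(a_0+a_1)\ne 0$: the desired root is the one on the branch of the parabola along which $g$ increases, and one checks that this branch does cover the interval in question, so the quadratic formula directly produces the first two branches of $\Phi_B$. (On the outer intervals there is a shortcut: completing the square in $\calS(x)/a_0+\calS(x-x_1)/a_1$ rewrites the equation as a single-$\calS$-term equation, whose solution is furnished by $\Phi_A$ via Proposition~\ref{prp:PhiA}.) In the two middle cases $\alpha=\pm(a_0-a_1)$ may vanish, but there $\beta=a_0(a_1b+2|x_1|)>0$ because the middle interval lies on the side of $0$ that carries the sign of $x_1$; hence the desired root can be written as $-2\gamma/(\beta+\sqrt{\beta^2-4\alpha\gamma})$, which remains well-defined and continuous through $\alpha=0$, and this is precisely the form of the third and fourth branches of $\Phi_B$. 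Each of the four identifications is then a direct algebraic verification.

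Finally, the boundary cases with $a_0a_1=0$ (still under $a_0^2+a_1^2>0$) would be handled by continuity of $\Phi_B$ in its arguments together with the limiting convention in the statement: finiteness of $\lim_{\tilde a_0\to 0}\calS(x)/\tilde a_0$ (resp.\ $\lim_{\tilde a_1\to 0}\calS(x-x_1)/\tilde a_1$) collapses the corresponding term, and $\Phi_B$ degenerates accordingly (for instance $\Phi_B\to x_1$ as $a_1\to 0$, and symmetrically for $a_0$). I expect the principal obstacle to be the bookkeeping in the second step --- confirming that the four sign conditions are genuinely exhaustive and are correctly matched to the three intervals as $\sgn(x_1)$ runs over its values --- together with the algebra needed to recognize the somewhat opaque rationalized expressions in the third and fourth branches of $\Phi_B$ as the sought roots.
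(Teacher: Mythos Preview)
Your proposal is correct and is exactly the kind of case-by-case quadratic analysis the paper has in mind: the paper itself gives no detailed proof, stating only that it ``can be obtained through tedious but straightforward derivations.'' Your sketch simply makes those derivations explicit, so there is nothing to compare beyond noting that you have supplied the omitted details (including the typo fix $bx\mapsto bx_1$ and the rationalized-root form for the middle interval to handle $a_0=a_1$).
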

The proofs of these propositions can be obtained through tedious but straightforward derivations.

\par

We also use the following theorems, of which the proofs are rather trivial:
\begin{theorem}\label{thm:fgswitch}
Let $f:\bbR\to\bbR$ and $g:\bbR\to\bbR$ be strictly decreasing functions and let $g(0)\leq f(0)$ be satisfied. Let $x_f$ and $x_g$ satisfy $b=f(x_f)$ and $b=g(x_g)$ where $b\in\bbR$. Then, the following statement holds true:
\beqa
b\in\gsgn(f(x),x,g(x)) \ \iff\ 
x = \lb\{\barr{ll} 
x_f &\mbox{if } b > f(0) \\
x_g &\mbox{if } b < g(0) \\
0   &\mbox{otherwise}. \\
\earr\rb.
\eeqa
\end{theorem}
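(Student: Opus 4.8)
The plan is to unfold the definition of $\gsgn$ into an explicit piecewise condition on $x$, and then to split the argument according to the position of $b$ relative to the two thresholds $f(0)$ and $g(0)$.

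First I would rewrite the left-hand side. By the definition of $\gsgn$ and the hypothesis $g(0)\leq f(0)$ (which ensures $\mathrm{cl}\{f(0),g(0)\}=[g(0),f(0)]$), the membership $b\in\gsgn(f(x),x,g(x))$ is equivalent to the disjunction of the following three statements: (i) $x>0$ and $b=g(x)$; (ii) $x=0$ and $g(0)\leq b\leq f(0)$; (iii) $x<0$ and $b=f(x)$.

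Next I would record two consequences of strict monotonicity. Since $f$ is strictly decreasing and $f(x_f)=b$, the sign of $x_f$ is opposite to that of $b-f(0)$, and $x=x_f$ is the unique solution of $b=f(x)$; symmetrically for $g$, $x_g$, and $b-g(0)$. With these facts I would treat the three regimes in turn. If $b>f(0)$: statement (i) is impossible because $x>0$ gives $g(x)<g(0)\leq f(0)<b$, statement (ii) is impossible because $b\notin[g(0),f(0)]$, and statement (iii) holds exactly for $x=x_f$ (which is indeed negative). If $b<g(0)$: the mirror argument rules out (ii) and (iii) and leaves (i) holding exactly for $x=x_g$. If $g(0)\leq b\leq f(0)$: statement (i) is impossible since $x>0$ forces $g(x)<g(0)\leq b$, statement (iii) is impossible since $x<0$ forces $f(x)>f(0)\geq b$, and statement (ii) holds for $x=0$; hence the membership is equivalent to $x=0$. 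Collecting the three regimes yields the claimed equivalence.

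I do not anticipate a genuine obstacle, since the statement is a bookkeeping exercise; the only points needing care are using $g(0)\leq f(0)$ to identify the convex closure at $x=0$ with a nonempty interval $[g(0),f(0)]$, and checking that the endpoint values $b=f(0)$ and $b=g(0)$ (which belong to the ``otherwise'' case) correctly force $x=0$ --- this is precisely why the $x=0$ branch uses the closed interval.
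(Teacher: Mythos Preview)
Your proof is correct. The paper does not actually supply a proof of this theorem---it merely states that the result is ``rather trivial''---so your case analysis (unfolding $\gsgn$ and splitting on the position of $b$ relative to $g(0)$ and $f(0)$) is exactly the routine verification the authors omit, and there is nothing to compare against.
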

\begin{theorem}\label{thm:main}
Let $f_1:\bbR\to\bbR$ and $f_2:\bbR\to\bbR$ be strictly decreasing functions. Let $x_i$ satisfy $f_i(x)=0$ for $i\in\{1,2\}$. Then, the following two statements hold true:
\beqa
\min(f_1(x),f_2(x))=0 &\iff& x = \min(x_1,x_2) \\
\max(f_1(x),f_2(x))=0 &\iff& x = \max(x_1,x_2).
\eeqa
\end{theorem}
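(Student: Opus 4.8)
The plan is to use the strict monotonicity of $f_1$ and $f_2$ to convert each sign condition on $f_i(x)$ into an order relation between $x$ and the root $x_i$, and then read off both equivalences by an elementary case analysis. First I would record the basic consequences of strict monotonicity: each $f_i$ is injective, so its root $x_i$ is unique, and for every $x$ we have $f_i(x) > 0 \iff x < x_i$, $f_i(x) = 0 \iff x = x_i$, and $f_i(x) < 0 \iff x > x_i$.

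For the first equivalence I would start from the observation that $\min(f_1(x),f_2(x)) = 0$ holds if and only if $f_1(x) \geq 0$, $f_2(x) \geq 0$, and at least one of $f_1(x), f_2(x)$ equals $0$. Translating through the monotonicity equivalences, the first two conditions become $x \leq x_1$ and $x \leq x_2$, i.e. $x \leq \min(x_1,x_2)$, while the last becomes $x = x_1$ or $x = x_2$. It then remains to check that ``$x \leq \min(x_1,x_2)$ and $x \in \{x_1,x_2\}$'' is the same as ``$x = \min(x_1,x_2)$'': if $x = x_j$ and $x_j \leq \min(x_1,x_2)$, then $x_j = \min(x_1,x_2)$; conversely $\min(x_1,x_2)$ is one of $x_1, x_2$ and trivially satisfies the inequality. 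This gives the first claim.

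For the second equivalence I would run the mirror-image argument: $\max(f_1(x),f_2(x)) = 0$ iff $f_1(x) \leq 0$, $f_2(x) \leq 0$, and at least one of them vanishes, which by monotonicity becomes $x \geq \max(x_1,x_2)$ together with $x \in \{x_1,x_2\}$, hence $x = \max(x_1,x_2)$. Alternatively, one can deduce it from the first equivalence by applying that equivalence to the functions $\tilde{f}_i$ given by $\tilde{f}_i(x) = -f_i(-x)$, each of which is again strictly decreasing with unique root $-x_i$, and using the identity $\max(f_1(x),f_2(x)) = -\min(\tilde{f}_1(-x),\tilde{f}_2(-x))$.

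As the theorem statement already anticipates, there is no real obstacle here; the only point requiring a little care is the bookkeeping of the ``at least one vanishes'' clause alongside the inequality $x \leq \min(x_1,x_2)$ (respectively $x \geq \max(x_1,x_2)$), including the degenerate case $x_1 = x_2$, which the case analysis above handles uniformly.
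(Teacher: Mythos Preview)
Your argument is correct and complete. The paper itself omits the proof entirely, remarking only that it is ``rather trivial''; your monotonicity-based case analysis is exactly the kind of elementary verification the authors are leaving to the reader, so there is nothing to contrast.
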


\section{Nonsmooth Quasistatic Model}\label{sc:stt}

This section considers the hydraulic circuit illustrated in Fig.~\ref{fg:circuit}, which is a four-valve independent metering circuit to drive a double-acting hydraulic actuator. The actuator has two chambers separated by the piston, and the motion of the piston is extracted as the motion of the rod, which applies forces to external objects. We are interested in the quasistatic relation among the rod velocity $v$, positive when the rod is extending, the external force $f$, positive when it is compressing the rod, and the opening ratios of the valves. This hydraulic circuit is similar to those studied in, e.g., \cite{Tabor_2005_Four,Shenouda_2006_PhD,Shenouda_2008_Switching,Eriksson_2011_Metering,Choi_2015_Excavator}, where the quasistatic relations are also considered. Our main contribution lies in an elaborate analytical representation of the whole circuit, which is rather complicated than those in previous studies, using the nonsmooth formalism to deal with relief valves and check valves.

\par

This article uses the terminology for linear hydraulic actuators (i.e., hydraulic cylinders), but the presented approach is applicable also to rotary hydraulic actuators by replacing the velocity and the external force by the angular velocity and the torque, respectively.

\subsection{Quasistatic Relations}

In Fig.~\ref{fg:circuit}, $Q_*$ and $P_*$ denote the flowrates and the pressure at each point. The pump provides the flowrate $Q$ to the circuit via a pump check valve and two of the four main control valves. These control valves are connected to the head-side and the rod-side chambers of the actuator. The chambers are also connected to the tank with the zero pressure via the other two control valves. Each chamber of the actuator also connects to the tank through a parallel combination of a relief valve and a check valve, which are named as indicated in the figure. There is another control valve, referred to as a bleed valve, and it leads to the tank in parallel to a relief valve, referred to as a pump relief valve.

\par

\begin{figure}[t!]\begin{center}
\includegraphics[scale=1.1]{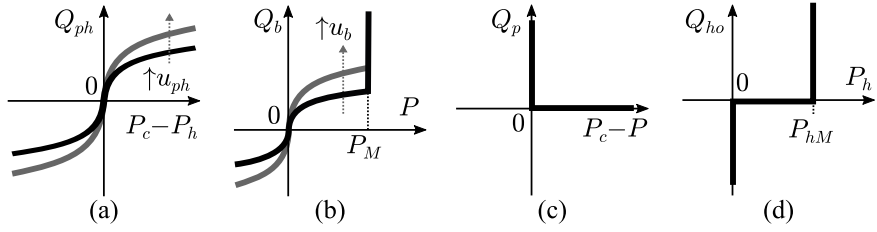}\\[-4pt]
\caption{Flowrate-pressure relations at: (a) one of main control valves \eq{qphdd}, (b) the bleed and the pump relief valves \eq{QboA}, (c) the pump check valve \eq{Qp}, and (d) the head-side relief and the suction check valves \eq{Qho}.  }\label{fg:graphs}
\end{center}\end{figure}

The degrees of the opening of the valves are represented by dimensionless variables $u_*\in[0,1]$ ($*\in\{ph,tr,pr,th,b\}$), which are the ratios of the valve opening areas to their maximum values. The control valves are manipulated by a controller that accepts external commands, which are given through, e.g., operation levers of an excavator. When the external command is to stop the actuator, all the four main valves are closed. When the external command is to move the actuator in the positive direction (i.e., extend the rod), both or either of $u_{ph}$ and $u_{tr}$ are set positive and $u_{th}$ and $u_{pr}$ are set zero. When the operator's command is to move the actuator in the negative direction (i.e, retract the rod), $u_{ph}$ and $u_{tr}$ are set zero and both or either of $u_{th}$ and $u_{pr}$ are set positive. Based on this idea, we assume that the vector $\bmu\define[u_{ph},u_{tr},u_{pr},u_{th},u_b]^T$ always belongs to either of the following three subsets: 
\begin{subequations}\dfeq{calUdef}
\beqa
\bmcalU_0 &\define& \{\bmu\in\bbR^5\,|\, u_1=u_2=u_3=u_4=0 \,\wedge\, u_5>0         \} \\
\bmcalU_+ &\define& \{\bmu\in\bbR^5\,|\, u_1^2+u_2^2>0     \,\wedge\, u_3=u_4=0     \,\wedge\, u_5\geq 0 \}\\
\bmcalU_- &\define& \{\bmu\in\bbR^5\,|\, u_1=u_2=0         \,\wedge\, u_3^2+u_4^2>0 \,\wedge\, u_5\geq 0 \} 
\eeqa
\end{subequations}
where $u_i$ ($i\in\{1,\cdots,5\}$) stands for the $i$th element of the vector $\bmu$. 

\par

Now, let us make an exhaustive list of algebraic relations among the pressures, the flowrates, the external force, and the actuator velocity at the steady state. First, according to the principle of mass conservation, one can see that the following relations hold true at junctions in the circuit:
\begin{subequations}\dfeq{Q}%
\beqa
Q_{hn}&=& Q_{ph}-Q_{th} \\
Q_{rn}&=& Q_{pr}-Q_{tr} \\
Q_h   &=& Q_{hn}-Q_{ho} \\
Q_r   &=& Q_{rn}-Q_{ro} \\
Q_t   &=& Q_{th}+Q_{tr} \\
Q_p   &=& Q_{ph}+Q_{pr} \\
Q_b   &=& Q-Q_p. 
\eeqa

\par

Second, let us focus on the control valves. As indicated in Fig.~\ref{fg:circuit}, $P_h$ and $P_r$ are the internal pressures of the head- and rod-side chambers, respectively, $P_c$ is the pressure at the check valve connected to the pump, and $P$ is the pressure at the outlet of the pump. According to the conventional orifice model \cite{Borutzky_2002_Orifice,Christofori_2015_Model}, we can assume that the following flowrate-pressure relations are satisfied:
\beqa
Q_{ph}& =& c_{ph} u_{ph}  \calR(P_c-P_h) \dfeq{qphdd} \\
Q_{th}& =& c_{th} u_{th}  \calR(P_h) \\
Q_{pr}& =& c_{pr} u_{pr}  \calR(P_c-P_r) \\
Q_{tr}& =& c_{tr} u_{tr}  \calR(P_r) \\
Q_b   &\in& c_b    u_b  \calR(P)+\calN_{(-\infty,P_M]}(P). \dfeq{QboA}
\eeqa
Fig.~\ref{fg:graphs}(a) illustrates the relation \eq{qphdd}. The coefficients $c_*$ are defined as $c_* = C_* a_* \sqrt{2/\rho}$ ($*\in\{ph,tr,pr,th,b\}$) where $C_*$ is the dimensionless coefficient named a discharge coefficient~\cite{Lichtarowicz_1965_Discharge}, which is typically around $0.6$ or $0.7$~\cite{YeY_2014_Valve,WuD_2002_Discharge}, $a_*$ is the maximum opening area (m$^2$) of the valve, and $\rho$ is the mass density (kg/m$^3$) of the oil. Equation \eq{QboA} represents the combined effect of the relief valve and the bleed valve, which limits the pump pressure up to $P_M$ as illustrated in Fig.~\ref{fg:graphs}(b).

\par

Third, let us consider the check valves and the relief valves. The check valve connected to the pump imposes the following constraint:
\beqa
Q_p&\in&-\calN_{[0,\infty)}(P_c-P), \dfeq{Qp}
\eeqa
which means that the flowrate $Q_p$ is zero when $P_c-P>0$, as illustrated in Fig.~\ref{fg:graphs}(c). The effects of the relief valves and the suction check valves connected to the chambers are written as follows:
\beqa
Q_{ho} &\in& \calN_{[0,P_{hM}]}(P_h)  \dfeq{Qho}\\
Q_{ro} &\in& \calN_{[0,P_{rM}]}(P_r).
\eeqa
Here, $P_{hM}$ and $P_{rM}$ are the pressure limits of the relief valves connected to the head- and rod-side chambers, respectively. These expressions mean that $P_h$ and $P_r$ are always in the ranges of $[0,P_{hM}]$ and $[0,P_{rM}]$, respectively, as illustrated in Fig.~\ref{fg:graphs}(d). When the pressure reaches the upper limit, the oil flows into the tank. When the pressure reaches zero, the oil is drawn into the chamber from the tank.

\par

Lastly, we discuss the actuator. Let $A_h$ and $A_r$ be the cross-sectional areas of the head- and rod-side chambers, respectively. Then, at the steady state where the rod inertia can be neglected, the constraints imposed by the actuator can be written as follows:
\beqa
v&=&Q_h/A_h \\
v&=&-Q_r/A_r\\
f&=&A_h P_h-A_r P_r.
\eeqa
\end{subequations}
If one deals with rotary actuators, both $A_r$ and $A_h$ (measured in m$^2$) should be replaced by the volume displacement per one radian of rotation, which is measured in m$^3$/rad. 

\par

In conclusion, now we have 18 algebraic constraints in \eq{Q} and 19 variables, which are listed as follows
\btm
\tm $\{Q_{ho},Q_{ro},Q_{hn},Q_{rn},Q_h,Q_r,Q_{ph},Q_{pr},Q_{th},Q_{tr},Q_t,Q_p,Q_b\}$: 13 flowrate values
\tm $\{P,P_h,P_r,P_c\}$: 4 pressure values
\tm $\{f,v\}$: the external force to the rod and the velocity of the rod.
\etm

\begin{remark}
The presented formalism can be said to be close to the classical hydraulic-electric analogy \cite{Greenslade_2003_Analogy,Esposito_1969_Analogy}, which replaces the pressure and the flowrate by the voltage and the current, respectively. Considering that the presented approach involves the nonsmoothness, it is also analogous to the {\it nonsmooth electronics} \cite{Addi_2007_Circuit,Goeleven_2008_Existence,Acary_2010_TimeStepping}. In this analogy, a check valve, for example, corresponds to an ideal diode~\cite{Acary_2010_TimeStepping}. 
\end{remark}

\subsection{Normalized Representations}

For the convenience of derivation, we now normalize some quantities in the following manner:
\beqa
&& q_*\define Q_*/A_h\ (*\in\{ho,hn,h,ph,th\}),\quad \uh_* \define c_*u_*/A_h^{3/2} \ (*\in\{ph,th\})\\
&& q_*\define Q_*/A_r\ (*\in\{ro,rn,r,pr,tr\}),\quad \uh_* \define c_*u_*/A_r^{3/2} \ (*\in\{pr,tr\})\\
&& F_{*}\define P_{*}A_{*}, \quad F_{*M}=P_{*M}A_*\ (*\in\{h,r\})  \\
&& U_b \define c_b u_b.
\eeqa
The regularized input vector is defined as $\bmuh=[\uh_{ph},\uh_{tr},\uh_{pr},\uh_{th},U_b]^T$. By using these definitions, \eq{Q} can be rewritten as follows:
\begin{subequations}
\beqa
&&q_{hn} =q_{ph}-q_{th} \\
&&q_{rn} =q_{pr}-q_{tr} \\
&&q_h =q_{hn}-q_{ho} \\
&&q_r =q_{rn}-q_{ro} \\
&&Q_t =A_hq_{th}+A_r q_{tr} \\
&&Q_p =A_hq_{ph}+A_r q_{pr} \\
&&Q_b =Q-Q_p\dfeq{QboBBB} \\
&&q_{ph} = \uh_{ph}\calR(A_hP_c-F_h) \\
&&q_{th} = \uh_{th}\calR(F_h    ) \\
&&q_{pr} = \uh_{pr}\calR(A_rP_c-F_r) \\
&&q_{tr} = \uh_{tr}\calR(F_r    ) \\
&&Q_b \in U_b\calR(P)+\calN_{(-\infty,P_M]}(P) \dfeq{QboABBB}\\ 
&& Q_p\in\calN_{(-\infty,P_c]}(P) \dfeq{QpBBB} \\
&&q_{ho} \in \calN_{[0,F_{hM}]}(F_h) \\
&&q_{ro} \in \calN_{[0,F_{rM}]}(F_r) \\
&&v=q_h \\
&&v=-q_r \\
&&f=F_h-F_r.\dfeq{ff}
\eeqa
\end{subequations}
Note that the equivalence between \eq{Qp} and \eq{QpBBB} can be derived from the definition of the normal cone. 

\par

Now, let us attempt to reduce the number of variables. Substituting \eq{QpBBB} and \eq{QboBBB} into \eq{QboABBB} yields
\beqa
Q &\in& \calN_{(-\infty,P_c]}(P) + U_b\calR(P)+\calN_{(-\infty,P_M]}(P) \non\\
 &=& U_b\calR(P)+\calN_{(-\infty,\min(P_c,P_M)]}(P),\dfeq{QboBBBccc} 
\eeqa
which is equivalent to
\beqa
P &=&\min(P_c,P_M,\calS(Q/U_b))
\eeqa
because of Theorem~\ref{thm:mainp2}. Since $q_h=-q_r=v$, we can eliminate the two variables $q_h$ and $q_r$. Moreover, $q_{ho}$, $q_{ro}$, $q_{hn}$, $q_{rn}$, $q_{ph}$, $q_{th}$, $q_{pr}$, $q_{tr}$, and $Q_t$ can also be eliminated. This gives the following five equations for the four variables $\{P,P_c,F_h,F_r,f\}$:
\begin{subequations}\dfeq{thesixb}
\beqa
&&-v +\uh_{ph}\calR(A_hP_c-F_h)-\uh_{th}\calR(F_h)\in \calN_{[0,F_{hM}]}(F_h)  \\
&& v +\uh_{pr}\calR(A_rP_c-F_r)-\uh_{tr}\calR(F_r)\in \calN_{[0,F_{rM}]}(F_r)   \\
&&A_h \uh_{ph}\calR(A_hP_c-F_h)+A_r \uh_{pr} \calR(A_rP_c-F_r) \in\calN_{(-\infty,P_c]}(P) \\
&& Q \in U_b\calR(P) + A_h \uh_{ph}\calR(A_hP_c-F_h)+A_r \uh_{pr} \calR(A_rP_c-F_r) +\calN_{(-\infty,P_M]}(P) \\
&& f=F_h-F_r.
\eeqa
\end{subequations}

\subsection{Main Result: Nonsmooth Quasistatic Map from $v$ to $f$}\label{ss:vfrel}

Now we derive the relation between $f$ and $v$ from \eq{thesixb}. If $\bmuh\in\bmcalU_0$, \eq{thesixb} reduces to the following:
\begin{subequations}\dfeq{thesixbzero}
\beqa
-v &\in& \calN_{[0,F_{hM}]}(F_h)  \\
 v &\in& \calN_{[0,F_{rM}]}(F_r)   \\
 0 &\in& \calN_{(-\infty,P_c]}(P) \\
 Q &\in& U_b\calR(P)+\calN_{(-\infty,P_M]}(P)\\
 f &=& F_h-F_r,
\eeqa
\end{subequations}
from which 
\beqa
F_h\in\gsgn(F_{hM},v,0) ,\quad
F_r\in\gsgn(0,v,F_{rM}) \dfeq{fzer}
\eeqa
and $f\in\gsgn(F_{hM},v,-F_{rM})$ can be derived by using \eq{calNgsgn}. This means that $f$ can take any values between $F_{hM}$ and $-F_{rM}$ when $v=0$, which is consistent with the fact that, when all the main control valves are closed, the cylinder holds its position by producing the reaction force against the external force as long as the relief valves are closed.

\par

If $\bmuh\in\bmcalU_+$, \eq{thesixb} reduces to the following:
\begin{subequations}
\beqa
&&-v +\uh_{ph}\calR(A_hP_c-F_h)\in \calN_{[0,F_{hM}]}(F_h)  \dfeq{vq1cc}\\
&& v -\uh_{tr}\calR(F_r)\in \calN_{[0,F_{rM}]}(F_r)  \dfeq{vq2bb} \\
&&A_h \uh_{ph}\calR(A_hP_c-F_h) \in\calN_{(-\infty,P_c]}(P) \dfeq{qp1bb}\\
&& Q \in U_b\calR(P) + A_h \uh_{ph}\calR(A_hP_c-F_h)+\calN_{(-\infty,P_M]}(P) \dfeq{qubs} \\
&& f = F_h-F_r. \dfeq{ffsdf} 
\eeqa
\end{subequations}
Here, \eq{vq1cc}, \eq{vq2bb} and \eq{qp1bb} can be rewritten as
\beqa
&&F_h = \proj_{[0,F_{hM}]} ( A_hP_c-\calS(v)/\uh_{ph}^2 ) \dfeq{prvf} \\
&&F_r = \proj_{[0,F_{rM}]} ( \calS(v)/\uh_{tr}^2 ) \dfeq{prvfvv} \\
&&P_c = \max(P,F_h/A_h), \dfeq{Pcpbma}
\eeqa
respectively, because of Theorem~\ref{thm:mainp2}. Substituting \eq{Pcpbma} into \eq{prvf} and \eq{qubs} results in:
\begin{subequations}\dfeq{thetwo}
\beqa
F_h &=& \proj_{[0,F_{hM}]}(\max(A_hP,F_h)-\calS(v)/\uh_{ph}^2). \dfeq{vq1bbffd} \\
 Q &\in& U_b\calR(P) + A_h \uh_{ph}\calR(\max(A_hP-F_h,0))+\calN_{(-\infty,P_M]}(P), \dfeq{qubsb} 
\eeqa
\end{subequations}
respectively. Now $F_r$ is obtained by \eq{prvfvv}, and thus we focus on obtaining $F_h$ from \eq{thetwo}. If $v<0$, \eq{vq1bbffd} implies $F_h=F_{hM}$. If $v=0$, \eq{vq1bbffd} implies that $\min(A_hP,F_{hM})\leq F_h\leq F_{hM}$. If $v=0$ and $F_h<F_{hM}$, \eq{vq1bbffd} implies $F_h\geq A_hP$ and substituting it into \eq{qubsb} yields $P=\min(P_M,Q^2/U_b^2)$. Therefore, if $v=0$, the condition \eq{thetwo} implies the following:
\beqa
F_h &\in& [\min(F_{hM},A_hP_M,A_hQ^2/U_b^2), F_{hM}]. \dfeq{arf}
\eeqa
If $v>0$, \eq{vq1bbffd} implies $A_hP>F_h$ and thus \eq{thetwo} can be rewritten as follows:
\begin{subequations}\dfeq{thetwob}
\beqa
F_h &=& \proj_{[0,F_{hM}]}(A_hP-\calS(v)/\uh_{ph}^2). \dfeq{fhdf} \\
 Q &\in& U_b\calR(P) + A_h \uh_{ph}\calR(A_hP-\proj_{[0,F_{hM}]}(A_hP-\calS(v)/\uh_{ph}^2))+\calN_{(-\infty,P_M]}(P). \dfeq{sf} \quad
\eeqa
\end{subequations}
The relation \eq{sf} implies that $P$ is a strictly increasing function of $Q$ that saturates at $P=P_M$. As long as $A_hP-\calS(v)/\uh_{ph}^2\in [0,F_{hM}]$, $P$ is written as
\beqa
P=\min\lb(P_M,-\dfrac{A_h^2}{U_b^2}\calS\lb(v-\dfrac{Q}{A_h}\rb)\rb). \dfeq{Pbdf}
\eeqa
Outside this range, $P$ is expressed as another function of $Q$ that strictly increases with respect to $Q$ and continuously connects to \eq{Pbdf}. The value of $P$ outside the range does not influence $F_h$ because $F_h$ is saturated at $0$ and $F_{hM}$ due to \eq{fhdf}. Therefore, if $v>0$, the condition \eq{thetwo} implies 
\beqa
F_h &=& \proj_{[0,F_{hM}]}\lb(\min\lb(A_hP_M,-\dfrac{A_h^3}{U_b^2}\calS\lb(v-\dfrac{Q}{A_h}\rb)\rb)-\dfrac{\calS(v)}{\uh_{ph}^2}\rb), \dfeq{ssdff}
\eeqa
which is obtained by substituting \eq{Pbdf} into \eq{fhdf}. Unifying the three cases, i.e., $F_h=F_{hM}$ for $v<0$, \eq{arf} for $v=0$, and \eq{ssdff} for $v>0$, we have the following expression for the case $\bmuh\in\bmcalU_+$:
\beqa
F_h &\in& \gsgn\lb(
F_{hM}
,v,
\proj_{[0,F_{hM}]}\lb(\min\lb(A_hP_M,-\dfrac{A_h^3}{U_b^2}\calS\lb(v-\dfrac{Q}{A_h}\rb)\rb)-\dfrac{\calS(v)}{\uh_{ph}^2}\rb)\rb). \dfeq{Fhsdff}
\eeqa

\par

If $\bmuh\in\bmcalU_-$, we have 
\beqa
F_h= \proj_{[0,F_{hM}]} ( -\calS(v)/\uh_{th}^2 ) \dfeq{Fhdff}
\eeqa
for any $v$ in the same manner as \eq{prvfvv}. Combining \eq{Fhdff} for $\bmuh\in\bmcalU_-$, \eq{Fhsdff} for $\bmuh\in\bmcalU_+$ and \eq{fzer} for $\bmuh\in\bmcalU_0$, we have
\begin{subequations}\dfeq{defGammah}
\beqa
F_h &\in& \Gamma_h(v) \define \gsgn\lb(\Gamma_{h-}(v) ,v,\Gamma_{h+}(v) \rb)
\eeqa
where
\beqa
\Gamma_{h+}(v) &\define& \proj_{[0,F_{hM}]}\lb(\min\lb(A_hP_M,-\dfrac{A_h^3}{U_b^2}\calS\lb(v-\dfrac{Q}{A_h}\rb)\rb)-\dfrac{\calS(v)}{\uh_{ph}^2}\rb)\\
\Gamma_{h-}(v) &\define& \proj_{[0,F_{hM}]}\lb( -\dfrac{\calS(v)}{\uh_{th}^2}\rb) 
\eeqa
\end{subequations}
for any $\bmuh$. In the same manner, $F_r$ for any $\bmuh$ is obtained as follows:
\begin{subequations}\dfeq{defGammar}
\beqa
F_r &\in&\Gamma_r(v)  \define \gsgn\lb(\Gamma_{r-}(v),v,\Gamma_{r+}(v)\rb)
\eeqa
where
\beqa
\Gamma_{r+}(v) &\define&
\proj_{[0,F_{rM}]}\lb( \dfrac{\calS(v)}{\uh_{tr}^2}\rb)\\
\Gamma_{r-}(v) &\define&
\proj_{[0,F_{rM}]}\lb(\min\lb(A_rP_M,\dfrac{A_r^3}{U_b^2}\calS\lb(v+\dfrac{Q}{A_r}\rb)\rb)+\dfrac{\calS(v)}{\uh_{pr}^2}\rb).
\eeqa
\end{subequations}
By using \eq{defGammah} and \eq{defGammar}, the relation between $v$ and $f$ is obtained as follows:
\beqa
f&\in&\Gamma(v) \define \Gamma_h(v)-\Gamma_r(v).
\eeqa

\par

For the convenience of further derivations, we can also write the set-valued map $\Gamma(v)$ in the following form:
\begin{subequations}\dfeq{Gammapp}
\beqa
\Gamma(v)&=&\gsgn(\Gamma_-(v),v,\Gamma_+(v)) \dfeq{Gammagsgn}
\eeqa
where
\beqa
\Gamma_+(v)&\define& \Gamma_{h+}(v)-\Gamma_{r+}(v)
\\
&=& \max(\min( \max(\Gamma_{+0a}(v),\Gamma_{+0b}(v)),\max(\Gamma_{+1a}(v),\Gamma_{+1b}(v)),\non\\
&&\quad \max(\Gamma_{+2a}(v),\Gamma_{+2b}(v))),\Gamma_{+3}(v),-F_{rM}) \dfeq{defgammapos} \quad\\
\Gamma_-(v)&\define& \Gamma_{h-}(v) -\Gamma_{r-}(v) \\
&=& \min(\max(\min(\Gamma_{-0a}(v),\Gamma_{-0b}(v)),\min(\Gamma_{-1a}(v),\Gamma_{-1b}(v)),\non\\
&&\quad \min(\Gamma_{-2a}(v),\Gamma_{-2b}(v))),\Gamma_{-3}(v), F_{hM}) \quad\\
\Gamma_{+0a}(v)&\define&F_{hM}- \dfrac{\calS(v)}{\uh_{tr}^2} \dfeq{Gammappeach0}\\
\Gamma_{+0b}(v)&\define&F_{hM}-F_{rM}\\
\Gamma_{+1a}(v)&\define&-\dfrac{A_h^3}{U_b^2}\calS\lb(v-\dfrac{Q}{A_h}\rb)-\dfrac{\calS(v)}{\uh_{ph}^2}- \dfrac{\calS(v)}{\uh_{tr}^2}\\
\Gamma_{+1b}(v)&\define&-\dfrac{A_h^3}{U_b^2}\calS\lb(v-\dfrac{Q}{A_h}\rb)-\dfrac{\calS(v)}{\uh_{ph}^2}-F_{rM}\\
\Gamma_{+2a}(v)&\define&A_hP_M-\dfrac{\calS(v)}{\uh_{ph}^2} -\dfrac{\calS(v)}{\uh_{tr}^2} \\
\Gamma_{+2b}(v)&\define&A_hP_M-\dfrac{\calS(v)}{\uh_{ph}^2} -F_{rM} \\
\Gamma_{+3}(v)&\define&-\dfrac{\calS(v)}{\uh_{tr}^2}\dfeq{Gammappeach1} \\
\Gamma_{-0a}(v)&\define&-F_{rM}-\dfrac{\calS(v)}{\uh_{th}^2}\\
\Gamma_{-0b}(v)&\define&-F_{rM}+F_{hM} \\
\Gamma_{-1a}(v)&\define&-\dfrac{A_r^3}{U_b^2}\calS\lb(v+\dfrac{Q}{A_r}\rb)-\dfrac{\calS(v)}{\uh_{pr}^2}-\dfrac{\calS(v)}{\uh_{th}^2}\\
\Gamma_{-1b}(v)&\define&-\dfrac{A_r^3}{U_b^2}\calS\lb(v+\dfrac{Q}{A_r}\rb)-\dfrac{\calS(v)}{\uh_{pr}^2}+F_{hM} \\
\Gamma_{-2a}(v)&\define&-A_rP_M-\dfrac{\calS(v)}{\uh_{pr}^2}-\dfrac{\calS(v)}{\uh_{th}^2} 
\\
\Gamma_{-2b}(v)&\define&-A_rP_M-\dfrac{\calS(v)}{\uh_{pr}^2}+F_{hM}
\\
\Gamma_{-3}(v)&\define&-\dfrac{\calS(v)}{\uh_{th}^2}. 
\eeqa
\end{subequations}
Here, the divisions by $\uh_{\ast}^2$ or $U_b^2$ do not cause troubles even when they are zeros because \eq{defgammapos} implicitly includes upper and lower bounds. One workaround in the implementation may be to replace the divisions by $\uh_{\ast}^2$ by those by $\max(\varepsilon,\uh_{\ast}^2)$ where $\varepsilon$ is a small positive number close to the machine epsilon.

\par

\begin{table}[t!]\centering
\caption{States of valves at each segment of $\Gamma(v)$. hR = head-side relief valve; hSC = head-side suction check valve; pR = pump relief valve; rSC = rod-side suction relief valve; rR = rod-side relief valve; o = open; x = closed; - = either open or closed.}\label{tb:valvestates}\vspace{5pt}
\begin{tabular}{cccccccccc}\hline
$\Gamma$               &$v$    &$\bmuh$                              &hR &hSC&pR &rSC&rR \\\hline
$-F_{rM}$              &$+$    &$\bmcalU_+\cup\bmcalU_0\cup\bmcalU_-$& x & o & - & x & o \\\hline
$\Gamma_{+3} $         &$+$    &$\bmcalU_+$                          & x & o & - & x & x \\\hline
$\Gamma_{+2b}$         &$+$    &$\bmcalU_+$                          & x & x & o & x & o \\\hline
$\Gamma_{+2a}$         &$+$    &$\bmcalU_+$                          & x & x & o & x & x \\\hline
$\Gamma_{+1b}$         &$+$    &$\bmcalU_+$                          & x & x & x & x & o \\\hline
$\Gamma_{+1a}$         &$+$    &$\bmcalU_+$                          & x & x & x & x & x \\\hline
$\Gamma_{+0b}$         &$+$    &$\bmcalU_+$                          & o & x & - & x & o \\\hline
$\Gamma_{+0a}$         &$+$    &$\bmcalU_+$                          & o & x & - & x & x \\\hline
$[\Gamma_+(0),F_{hM}]$ &$0$    &$\bmcalU_+$                          & x & x & - & x & x \\\hline
$[-F_{rM},F_{hM}]$     &$0$    &$\bmcalU_0$                          & x & x & - & x & x \\\hline
$[-F_{rM},\Gamma_-(0)]$&$0$    &$\bmcalU_-$                          & x & x & - & x & x \\\hline
$\Gamma_{-0a}$         &$-$    &$\bmcalU_-$                          & x & x & - & x & o \\\hline
$\Gamma_{-0b}$         &$-$    &$\bmcalU_-$                          & o & x & - & x & o \\\hline
$\Gamma_{-1a}$         &$-$    &$\bmcalU_-$                          & x & x & x & x & x \\\hline
$\Gamma_{-1b}$         &$-$    &$\bmcalU_-$                          & o & x & x & x & x \\\hline
$\Gamma_{-2a}$         &$-$    &$\bmcalU_-$                          & x & x & o & x & x \\\hline
$\Gamma_{-2b}$         &$-$    &$\bmcalU_-$                          & o & x & o & x & x \\\hline
$\Gamma_{-3} $         &$-$    &$\bmcalU_-$                          & x & x & - & o & x \\\hline
$F_{hM}$               &$-$    &$\bmcalU_+\cup\bmcalU_0\cup\bmcalU_-$& o & x & - & o & x \\\hline
\end{tabular}
\end{table}

\par

It should be noted that, at $v=0$, the function $\Gamma(v)$ is set-valued and its value is the closed set $[\Gamma_+(0),\Gamma_-(0)]$. The boundaries $\Gamma_+(0)$ and $\Gamma_-(0)$ are obtained from a straightforward derivation as
\begin{subequations}\dfeq{gammazero}
\beqa
\Gamma_+(0) = \Gamma_{h+}(0) -\Gamma_{r+}(0) ,\quad 
\Gamma_-(0) = \Gamma_{h-}(0) -\Gamma_{r-}(0) 
\eeqa
where
\beqa
\Gamma_{h+}(0) &=&\lb\{\barr{ll}
\min(F_{hM},A_hP_M,A_hQ^2/U_b^2) &\mbox{if } \uh_{ph}>0 \\
0                                &\mbox{if } \uh_{ph}=0 \\
\earr\rb.
\\
\Gamma_{r+}(0) &=&\lb\{\barr{ll}
0      &\mbox{if } \uh_{tr}>0 \\
F_{rM} &\mbox{if } \uh_{tr}=0 \\
\earr\rb. \dfeq{gammahp0}
\\
\Gamma_{h-}(0) &=&\lb\{\barr{ll}
0      &\mbox{if } \uh_{th}>0 \\
F_{hM} &\mbox{if } \uh_{th}=0 \\
\earr\rb. 
\\
\Gamma_{r-}(0) &=&\lb\{\barr{ll}
\min(F_{rM},A_rP_M,A_rQ^2/U_b^2) &\mbox{if } \uh_{pr}>0 \\
0                                &\mbox{if } \uh_{pr}=0. \\
\earr\rb. 
\eeqa
\end{subequations}

As can be seen from the expression \eq{Gammapp}, the function $\Gamma(v)$ is composed of many segments. Interestingly, each segment has a clear physical interpretation. For example, in the segments listed in \eq{Gammappeach0}-\eq{Gammappeach1}, the last term, either $-F_{rM}$ or $-\calS(v)/\uh_{tr}^2$, represents the state of the rod-side relief valve, either closed or open, respectively. In this way of consideration, we can summarize the valve states at each curve segment as in Table~\ref{tb:valvestates}.

\par

\subsection{Numerical Examples}\label{ss:numestt}

\begin{figure}[t!]\begin{center}
\includegraphics[scale=0.8]{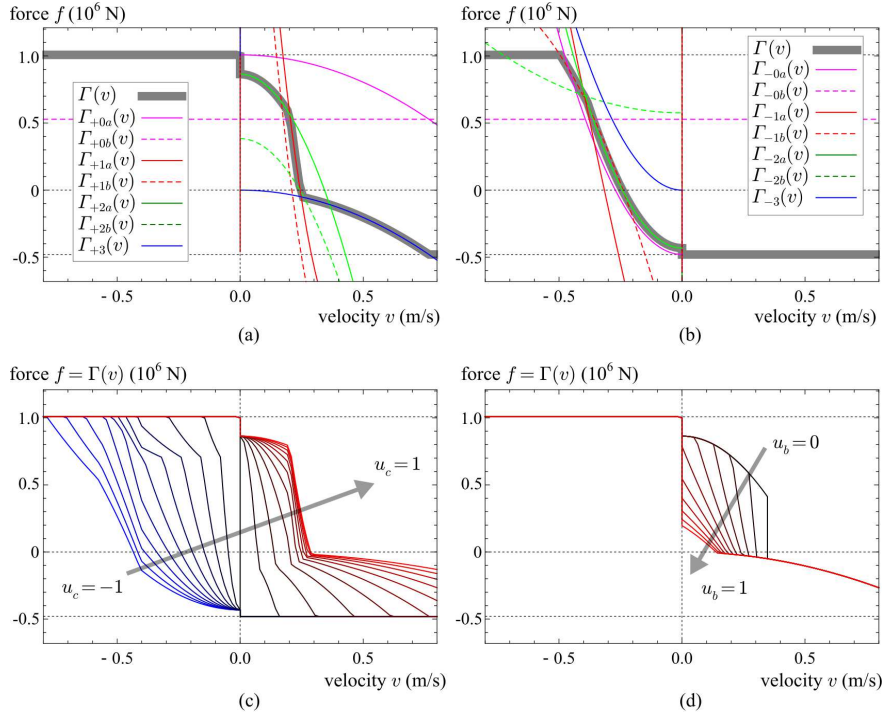}\\[-4pt]
\caption{Numerical examples of the quasistatic map $f\in\Gamma(v)$. (a) $u_c=0.5$, $u_b=0.2$; (b) $u_c=-0.5$, $u_b=0.2$; (c) $u_c\in[-1,1]$, $u_b=0.2$; (d) $u_c=0.7$, $u_b\in[0,1]$. The velocity $v$ is positive when the rod is extending. The force $f$ is positive when the external force is compressive and the actuator force acts to extend the rod. }\label{fg:nume_stt}
\end{center}\end{figure}

Some numerical examples are now presented. We consider an asymmetric hydraulic cylinder with the following parameter values: 
\beqa
&& C_* = 0.6,\ a_* = 0.0001~\mbox{m$^2$}, \ \rho = 850~\mbox{kg/m$^3$}, \non\\
&& A_r = 0.012~\mbox{m$^2$},\  A_h = 0.024~\mbox{m$^2$}, \non\\ 
&& P_{rM} = 40~\mbox{MPa},\  P_{hM} = 42~\mbox{MPa}, \ P_M = 36~\mbox{MPa}, \non\\
&& Q = 500~\mbox{L/min} = 0.00833~\mbox{m$^3$/s}. 
\eeqa
A common control command $u_c\in[-1,1]$ is used to set the openings of the main control valves as follows:
\beqa
&& u_{ph} = u_{tr} = \max(u_c,0),\ 
 u_{pr} = u_{th} = \max(-u_c,0). \dfeq{ucgiven}
\eeqa
The opening of the bleed valve was fixed at $u_b=0.2$ unless otherwise noted.

\par

Results with different values of $u_c$ and $u_b$ are shown in Fig.~\ref{fg:nume_stt}. Fig.~\ref{fg:nume_stt}(a) shows the function $\Gamma$ and its segments with a positive $u_c$, while Fig.~\ref{fg:nume_stt}(b) shows those with a negative $u_c$. They show that $\Gamma(v)$ is always a decreasing function of $v$ and it is set-valued at $v=0$. Fig.~\ref{fg:nume_stt}(c) shows how the function $\Gamma(v)$ varies according to the change in $u_c$. It shows that, at a constant external force, the velocity $v$ increases as $u_c$ increases, which is consistent with the behavior of real hydraulic actuator. Fig.~\ref{fg:nume_stt}(d) shows that the velocity $v$ decreases as the bleed valve opening $u_b$ increases. It can be explained by the fact that the oil flow into the actuator decreases as the bleed valve opens. 

\par

\section{Incorporation into Multibody Simulators}\label{sc:dyn}

\subsection{Virtual Viscoelastic Element}\label{ss:relax}

The previous section has modeled a hydraulic actuator as a set-valued function $\Gamma$ from the velocity $v$ to the force $f$. Because of the set-valuedness, the function $\Gamma$ is not convenient for the use in simulations. The set-valuedness at $v=0$ is an important feature that cannot be neglected because the actuator actually stops when the oil flow is blocked by the valves. This section shows an approach to incorporate the set-valued function $\Gamma$ in multibody simulators.

\par

\begin{figure}[t!]\begin{center}
\includegraphics[scale=1.2]{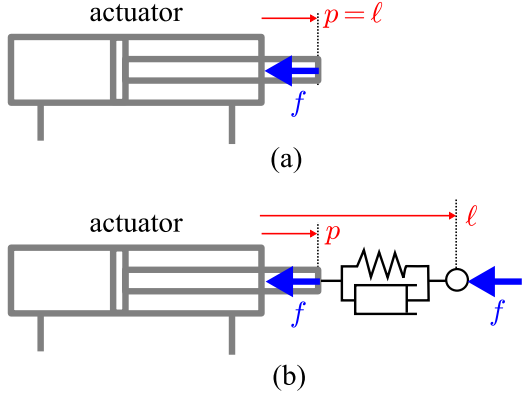}\\[-4pt]
\caption{Differential-algebraic relaxation using a virtual viscoelastic element. (a) The original system with the algebraic constraint $f\in\Gamma(\dot{\ell})$. (b) The approximate (relaxed) system with the differential-algebraic constraint \eq{thedae}.}
\label{fg:relax}
\end{center}\end{figure}

The quasistatic model presented in the previous section describes the system of Fig.~\ref{fg:relax}(a) as an algebraic constraint $f\in\Gamma(\dot{p})$ where $p$ is the displacement of the actuator rod and $v=\dot{p}$. To deal with the set-valuedness, we consider installing a virtual viscoelastic element connected with the actuator as illustrated in Fig.~\ref{fg:relax}(b). Let $\ell$ be the total displacement including the viscoelastic element and $p$ be the displacement of the actuator rod. The stretch of the viscoelastic element is $\ell-p$. One can see that, if the viscoelastic element is stiff enough, $p\approx \ell$ is maintained and thus the system of Fig.~\ref{fg:relax}(b) approximates the system of Fig.~\ref{fg:relax}(a). The dynamics of the system of Fig.~\ref{fg:relax}(b) can be described as follows:
\begin{subequations}\dfeq{thedae}
\beqa
f &=& K(p-\ell)+B(\dot{p}-\dot{\ell}) \dfeq{KpBdp} \\
f&\in& \Gamma(\dot{p}). \dfeq{fgmmv}
\eeqa
\end{subequations}
Here, $K$ and $B$ are the stiffness and the viscosity of the virtual viscoelastic element. Because there is no inertia between the actuator and the viscoelastic element, $f$ in \eq{KpBdp}, determined by the viscoelastic element, is always the same as $f$ in \eq{fgmmv}, acting on the actuator. Although some discussions on the choice of $K$ and $B$ will be given in Remark~\ref{rm:choicekb}, a simple guideline for a better approximation (i.e., $p\approx \ell$) is that the stiffness $K$ should be as high as the simulation is stable, and that the viscosity $B$ should be high enough to damp the oscillation in the relative displacement $p-\ell$. The expression \eq{thedae} can be seen as a differential-algebraic inclusion (DAI) with respect to $p$. The quantities $\ell$ and $\dot{\ell}$ are given as inputs, and $\dot{p}$ and $f$ are determined by the pair \eq{thedae} of algebraic constraints.

\par

To provide the solution of these algebraic constraints \eq{thedae}, let us consider a function $\Lambda$ that satisfies the following relation:
\beqa
 \beta v + \fb \in\Gamma(v) &\iff&  v = \Lambda(\beta,\fb) \dfeq{deflambda}
\eeqa
where $\beta>0$. Because $\Gamma(v)$ is a decreasing function of $v$ and $\beta v+\fb$ is a strictly increasing function of $v$, the algebraic inclusion $\beta v+\fb \in\Gamma(v)$ always has a unique solution with respect to $v$, and thus $\Lambda(\beta,\fb)$ is a globally single-valued function. Fortunately, $\Lambda(\beta,\fb)$ can be obtained analytically as will be shown in the next Section~\ref{ss:Lambda}. Considering that \eq{KpBdp} is equivalent to $f=B\dot{p}+\fb$ where $\fb=K(p-\ell)-B\dot{\ell}$, the DAI \eq{thedae} can be equivalently rewritten as follows:
\begin{subequations}\dfeq{odeeq}
\beqa
\dot{p} &=& \Lambda(B, K(p-\ell)-B\dot{\ell}) \\
f &=& K(p-\ell)+B(\dot{p}-\dot{\ell}).
\eeqa
\end{subequations}
The expression \eq{odeeq} is algebraically equivalent to the expression \eq{thedae}, and can be seen as a closed-form solution of the algebraic problem \eq{thedae}. Moreover, because $\Lambda$ is a continuous, single-valued function, \eq{odeeq} is only an ordinary differential equation (ODE) representing a first-order system of which the state is $p$. That is, \eq{odeeq} can be used for simulations combined with common numerical integration schemes.

\par

Another numerical scheme for the DAI \eq{thedae} can be obtained through the discretization along time. With the implicit (backward) Euler discretization, i.e., $v=\dot{p}\approx(p_k-p_{k-1})/h$ and $\dot{\ell}\approx(\ell_k-\ell_{k-1})/h$, one can discretize \eq{thedae} as follows:
\begin{subequations}\dfeq{thedaedisc}
\beqa
f_k &=& (B+hK)v+K(p_{k-1}-\ell_k)-B(\ell_k-\ell_{k-1})/h \dfeq{KpBdpb} \\
f_k &\in& \Gamma(v) \dfeq{fgamma}\\
p_k &=&  p_{k-1} + hv 
\eeqa
\end{subequations}
where $k$ denote the discrete-time index and $h$ denotes the timestep size. The pair of constraints \eq{KpBdpb} and \eq{fgamma} poses an algebraic problem with respect to $v$ and $f_k$, and the function $\Lambda$ provides the solutions. That is, we can solve \eq{thedaedisc} in the following algorithm:
\begin{subequations}\dfeq{algo}
\beqa
\bar{f} &:=& K(p_{k-1}-\ell_k)- B(\ell_k - \ell_{k-1})/h  \\
v &:=& \Lambda(B+hK,\fb) \\
p_k &:=& p_{k-1}+ h v \\
f_k &:=& \fb  + (B+hK)v.
\eeqa
\end{subequations}
This algorithm accepts the inputs $\{\ell_k,p_{k-1},\ell_{k-1}\}$ and provides the output $f_k$, and can be seen as a numerical integration scheme for the DAI \eq{thedae}. In the simulation, the values of $\{p_k,\ell_k\}$ should be stored and used as $\{p_{k-1},\ell_{k-1}\}$ in the next timestep.

\par

In summary, the original set-valued constraint $f\in\Gamma(\dot{\ell})$ is approximated by the DAI \eq{thedae}, which is implementable as the ODE \eq{odeeq} or the discrete-time algorithm \eq{algo}. For the implementation, the continuous-time form \eq{odeeq} would be preferred for rather complicated simulations with sophisticated ODE solvers, while the Euler-based discrete-time form \eq{algo} would be convenient for rather simple simulations that do not require high accuracy. This approximation scheme, which relaxes a set-valued algebraic constraint by a DAI, is what Kikuuwe~\cite{Kikuuwe_2018_VSS} has called a {\it differential algebraic relaxation}. The scheme has been utilized to deal with Coulomb friction in simulators \cite{Kikuuwe_2006_TRO,Xiong_2013_Tribo,Xiong_2013_JAMH,Kikuuwe_2019_MPE} and to implement sliding mode controllers to discrete-time systems \cite{Kikuuwe_2010_TRO,Kikuuwe_2014_TMECH}.

\begin{remark}\label{rm:choicekb}
The choice of $K$ and $B$ can be discussed from two different points of view. On the one hand, if one needs to approximate the ideal situation where the actuator is rigidly connected to external components (such as links of an excavator), the stiffness $K$ should be set as high as the simulation is stable. On the other hand, if one needs to simulate a particular real system that comprises the compliance of the components and the compressibility of the oil, the values of $K$ and $B$ should be chosen so that the simulator's response is close to that of the real system. In this case, the values of $K$ and $B$ may be considered as the aggregation of the compliance and damping of the components and the oil. The tuning procedure might be performed by explicitly considering these factors, or on a trial-and-error basis, comparing the data from the simulator and the real system.
\end{remark}

\subsection{Analytical Form of Function $\Lambda$}\label{ss:Lambda}

Now we present an analytical form of $\Lambda(\beta,\fb)$, of which the definition has been given only implicitly as in \eq{deflambda}. Let $v_{\ast}$ denote the solution of $\beta v+\fb=\Gamma_{\ast}(v)$ with respect to $v$ where $\Gamma_{\ast}(v)$ are those listed in \eq{Gammapp}. By observing the definitions of $\Gamma_{\ast}(v)$ and using Proposition~\ref{prp:PhiA} and Proposition~\ref{prp:PhiB}, we can obtain the followings:
\begin{subequations}\dfeq{defvvvv}
\beqa
v_{+0a}&\define& \Phi_A\lb(\beta,\fb-F_{hM},\uh_{tr}^2\rb)\\ 
v_{+0b}&\define& (F_{hM}-F_{rM}-\fb)/\beta\\       
v_{+1a}&\define& \Phi_B\lb(\beta ,\fb, \psi(\uh_{ph},\uh_{tr}),\dfrac{U_b^2}{A_h^3}, \dfrac{Q}{A_h}\rb)\\ 
v_{+1b}&\define& \Phi_B\lb(\beta ,\fb+F_{rM},\uh_{ph}^2,\dfrac{U_b^2}{A_h^3}, \dfrac{Q}{A_h}\rb)\\       
v_{+2a}&\define& \Phi_A\lb(\beta ,\fb-A_hP_M, \psi(\uh_{ph},\uh_{tr})\rb)\\             
v_{+2b}&\define& \Phi_A\lb(\beta ,\fb-A_hP_M+F_{rM}, \uh_{ph}^2\rb)\\             
v_{+3} &\define& \Phi_A\lb(\beta ,\fb, \uh_{tr}^2 \rb) \\
v_{rM} &\define& (-F_{rM}-\fb)/\beta \\
v_{-0a}&\define& \Phi_A\lb(\beta,\fb+F_{rM},\uh_{th}^2\rb)\\ 
v_{-0b}&\define& (F_{hM}-F_{rM}-\fb)/\beta\\            
v_{-1a}&\define& \Phi_B\lb(\beta ,\fb,\psi(\uh_{pr},\uh_{th}),\dfrac{U_b^2}{A_r^3}, -\dfrac{Q}{A_r}\rb)\\          
v_{-1b}&\define& \Phi_B\lb(\beta ,\fb-F_{hM}       ,\uh_{pr}^2,\dfrac{U_b^2}{A_r^3}, -\dfrac{Q}{A_r}\rb)\\          
v_{-2a}&\define& \Phi_A\lb(\beta ,\fb+A_rP_M       ,\psi(\uh_{pr},\uh_{th})\rb)\\             
v_{-2b}&\define& \Phi_A\lb(\beta ,\fb+A_rP_M-F_{hM},\uh_{pr}^2\rb)\\             
v_{-3} &\define& \Phi_A\lb(\beta ,\fb              ,\uh_{th}^2 \rb) \\
v_{hM}    &\define& (F_{hM}-\fb)/\beta.
\eeqa
\end{subequations}
Then, by using Theorem~\ref{thm:fgswitch} and Theorem~\ref{thm:main}, the function $\Lambda$ can be obtained as follows:
\begin{subequations}\dfeq{defLambdadetail}
\beqa
\Lambda(\beta,\fb) &\define& \lb\{\barr{ll}
 v_{hM}              &\mbox{if } \fb > F_{hM}     \,\wedge\,\bmuh\in(\bmcalU_+\cup\,\bmcalU_0) \\
 \Lambda_+(\beta,\fb)&\mbox{if } \fb < \Gamma_+(0)\,\wedge\,\bmuh\in\bmcalU_+ \\
 v_{rM}              &\mbox{if } \fb < -F_{rM}    \,\wedge\,\bmuh\in(\bmcalU_-\cup\,\bmcalU_0) \\
 \Lambda_-(\beta,\fb)&\mbox{if } \fb > \Gamma_-(0)\,\wedge\,\bmuh\in\bmcalU_- \\
 0                   &\mbox{if } \Gamma_+(0)\leq\fb\leq\Gamma_-(0)
\earr\rb.
\eeqa
where 
\beqa
\Lambda_+(\beta,\fb)&\define& \max(\min(\max(v_{+0a},v_{+0b}),\max(v_{+1a},v_{+1b}), 
\max(v_{+2a},v_{+2b})),v_{+3},v_{rM}) \\
\Lambda_-(\beta,\fb)&\define&  \min(\max(\min(v_{-0a},v_{-0b}),\min(v_{-1a},v_{-1b}),
\min(v_{-2a},v_{-2b})),v_{-3},v_{hM}).
\eeqa
\end{subequations}
Here, $\Gamma_+(0)$ and $\Gamma_-(0)$ are those given by \eq{gammazero}.

\par

\subsection{Numerical Examples}\label{ss:numedyn}

\begin{figure}[t!]\begin{center}
\includegraphics[scale=1.0]{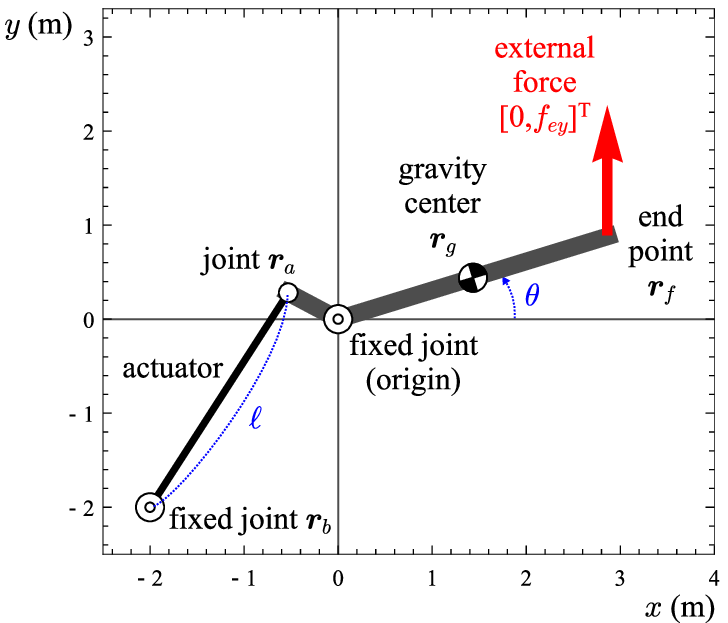}\\[-4pt]
\caption{A single-DOF arm driven by a hydraulic actuator.}\label{fg:dynamic_sys}
\vspace{20pt}
\includegraphics[scale=0.8]{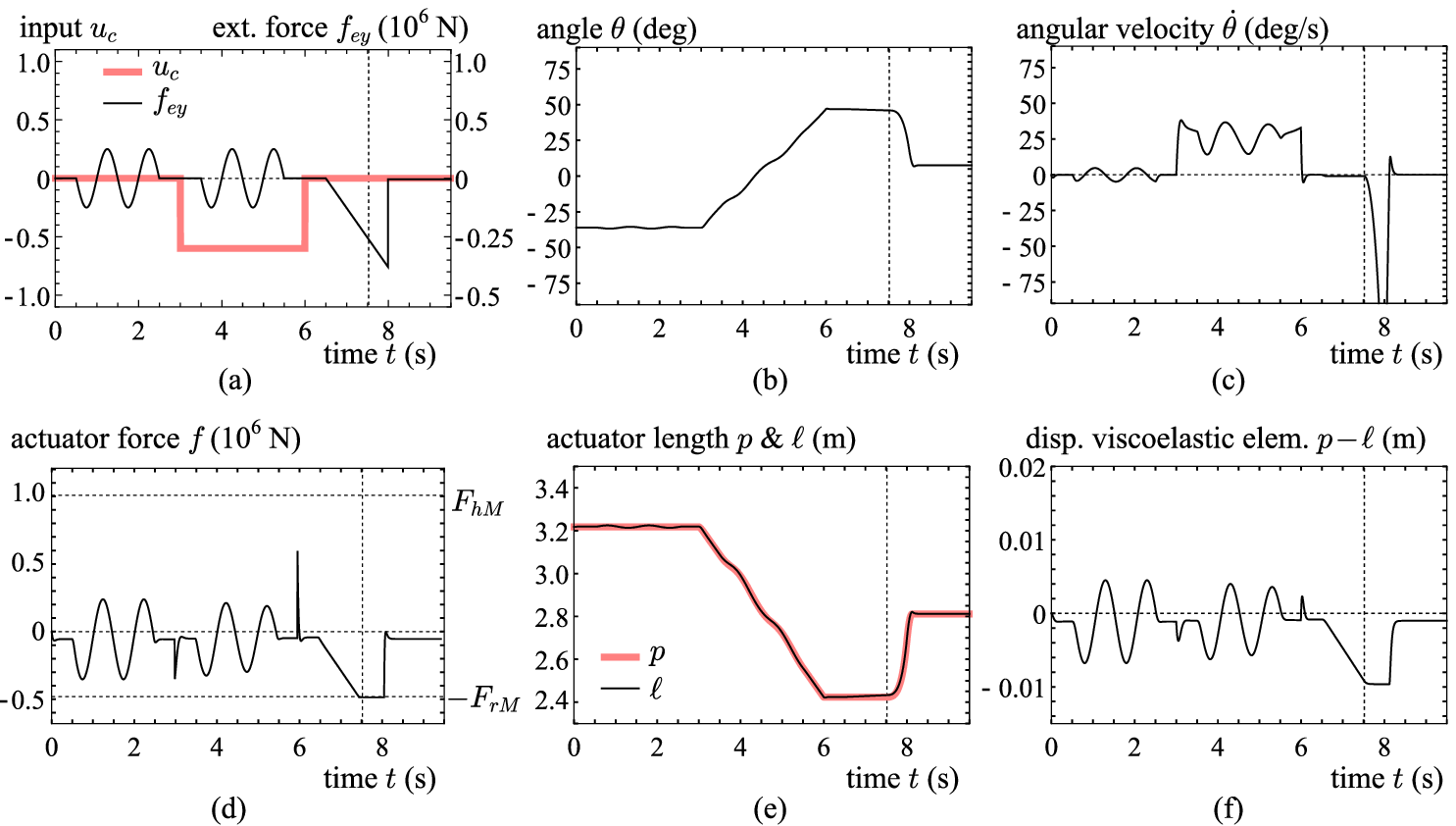}\\[-4pt]
\caption{Inputs (a) and results (b)-(f) of the simulation of the system of Fig.~\ref{fg:dynamic_sys}. The actuator force $f$ is positive when it acts to extend the rod. The actuator force $f$ saturates at $t=7.404$~s, indicated by the vertical lines.}\label{fg:nume_dyn}
\end{center}\end{figure}

The method is illustrated with a simple dynamics simulation of a single degree-of-freedom (DOF) arm system shown in Fig.~\ref{fg:nume_dyn}. The equation of motion is written as follows:
\beqa
(J+M L_g^2)\ddot{\theta} &=&    \bmr_g\times \bmtx{c} 0 \\ - M g \emtx + \bmr_m\times \lb(\dfrac{\bmr_m-\bmr_b}{\|\bmr_m-\bmr_b\|} f\rb) +\bmr_f\times \bmtx{c} 0 \\ f_{ey} \emtx \dfeq{mecdyna} \\
  \ell &=& \| \bmr_m-\bmr_b \| \dfeq{elcalc}\\
  f &=& K(\ell - p ) + B(\dot{\ell} - \dot{p})  \\  
  f &\in& \Gamma(\dot{p}) 
\eeqa
where $\bmr_*$ are those indicated in Fig.~\ref{fg:dynamic_sys}, which are calculaed as follows:  
\beqa
\bmr_g = \bmtx{c} L_g\cos\theta \\ L_g\sin\theta \emtx, \quad
\bmr_m = \bmtx{c} -L_m\cos(\theta-\alpha) \\ -L_m\sin(\theta-\alpha)\emtx,\quad \dfeq{rgrm}
\bmr_f = \bmtx{c} L_f\cos\theta\\ L_f\sin\theta\emtx.
\eeqa
Here, $\times$ stands for a pseudo cross-product of two-dimensional vectors, $[a_1,a_2]^T\times[b_1,b_2]^T=a_1b_2-a_2b_1$. The parameters were set as $L_g=1.5$~m, $L_m=0.6$~m, $\alpha=\pi/4$, $L_f=3$~m, $M=2000$~kg, and $J=5000$~kg$\cdot$m$^2$. The parameters for the virtual viscoelastic elements were chosen as $K=5\times 10^7$~N/m and $B=2.5\times 10^6$~N/m, which as high as the simulation was numerically stable. The parameters of the actuator were set the same as in Section~\ref{ss:numestt}. The openings of the main control valves were given by \eq{ucgiven} with the command $u_c\in[-1,1]$, and the bleed valve opening was set as $u_b=0.3$. The timestep size was set as $h=0.001$~s. At every timestep, the following procedure was performed:
\ben
\tm Calculate $\bmr_g$, $\bmr_m$, and $\bmr_f$ according to \eq{rgrm} with $\theta$ determined by the previous timestep, and calculate $\ell$ according to \eq{elcalc}.
\tm Calculate $f$ and update $p$ according to the algorithm \eq{algo} with the command $u_c$ and the obtained $\ell$. 
\tm Update $\theta$ and $\dot{\theta}$ according to the equation of motion \eq{mecdyna} with the inputs $f$ and $f_{ey}$.
\een
The initial value of $p$ was set equal to that of $\ell$.

\par

In the simulation, the command $u_c$ and $f_{ey}$ were given as indicated in Fig.~\ref{fg:nume_dyn}(a). The results are shown in Figs.~\ref{fg:nume_dyn}(b)-(f). From $t=0$~s to $3$~s, the valves are closed and thus the arm holds its angle even under the fluctuating external force $f_{ey}$. From $t=3$~s to $6$~s, the arm moves up due the negative $u_c$ (to retract the rod). Fig.~\ref{fg:nume_dyn}(c) shows that the velocity is more fluctuating in this period than when $u_c=0$. This behavior is consistent with the fact that hydraulic actuators become more compliant to external forces when the valves are open and the oil is permitted to flow. After $t=6$~s, the valves are closed again, and the increasing external force does not cause movement of the arm until $t=7.404$~s, with the increasing actuator force $f$ balancing the external force. At $t=7.404$~s, the actuator force $f$ reaches the value of $-F_{rM}$ at which the rod-side relief valve opens, and the arm starts to move down. In the whole period, the displacement $p-\ell$ of the viscoelastic element is maintained small (less than a few centimeters), exhibiting the validity of the relaxation scheme introduced in Section~\ref{ss:relax}. In conclusion, the results support the usefulness of the proposed model and the relaxation scheme for reproducing qualitative features of hydraulic actuators.

\begin{remark}\label{rm:choicekbnume}
The transient responses seen in Fig.~\ref{fg:nume_dyn}, such as the spikes in the actuator force $f$ at $t\approx 3$~s and $t\approx 6$~s, are determined by the values of $K$ and $B$. As discussed in Remark~\ref{rm:choicekb}, one needs to tune the values of $K$ and $B$ so that the simulator exhibits similar responses to the real system.
\end{remark}

\section{Extension 1: Regeneration Circuit}\label{sc:reg}
\subsection{Quasistatic Model}\label{ss:regstt}

\begin{figure}[t!]\begin{center}
\includegraphics[scale=1.2]{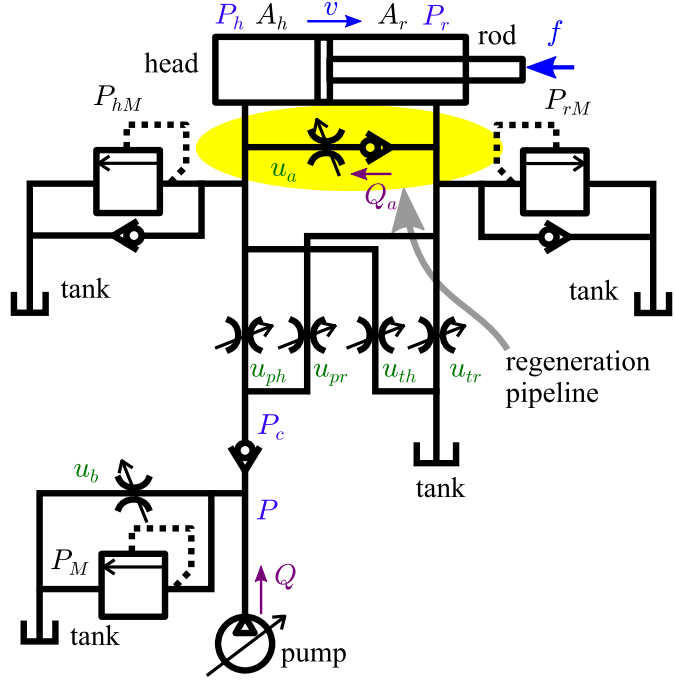}\\[-4pt]
\caption{Regeneration pipeline added to the circuit of Fig.~\ref{fg:circuit}.}\label{fg:reg_circuit}
\end{center}\end{figure}

In some practical applications, the circuit of Fig.~\ref{fg:circuit} includes an additional pathway from the rod-side chamber to the head-side chamber to make the extending movement faster. This section considers the circuit of Fig.~\ref{fg:reg_circuit}, which is an extension of the circuit of Fig.~\ref{fg:circuit}. We assume that the regeneration pipeline has a valve that flows only from the head side to the rod side, and assume that the cross-sectional areas of the cylinder satisfy
\beqa
A_h \geq A_r. \dfeq{assump}
\eeqa
There must be some cylinders that do not satisfy \eq{assump}, but the removal of this assumption is left for future study. Let $a_a$ is the maximum opening area (m$^2$) of the regeneration valve, and let $u_a\in[0,1]$ be the dimensionless input value, which is the ratio of the valve opening area to its maximum value $a_a$. Then, the flowrate of the oil through the regeneration pipeline is written as follows:
\beqa
Q_a &=& c_au_a\max(\calR(F_r/A_r-F_h/A_h,0)) \dfeq{qaua}
\eeqa
where $c_a \define C_a a_a \sqrt{2/\rho}$ where $C_a$ is the discharge coefficient of the regeneration valve, typically around $0.6$ or $0.7$.

\par

With this regeneration pipeline and its flowrate $Q_a$, \eq{thesixb} is extended into the following:%
\begin{subequations}\dfeq{thesixbreg}
\beqa
&&-v +\uh_{ph}\calR(A_hP_c-F_h)-\uh_{th}\calR(F_h)\in \calN_{[0,F_{hM}]}(F_h) - Q_a/A_h   \\
&& v +\uh_{pr}\calR(A_rP_c-F_r)-\uh_{tr}\calR(F_r)\in \calN_{[0,F_{rM}]}(F_r) + Q_a/A_r  \\
&&A_h \uh_{ph}\calR(A_hP_c-F_h)+A_r \uh_{pr} \calR(A_rP_c-F_r) \in\calN_{(-\infty,P_c]}(P) \\
&& Q \in U_b\calR(P) + A_h \uh_{ph}\calR(A_hP_c-F_h)+A_r \uh_{pr} \calR(A_rP_c-F_r) +\calN_{(-\infty,P_M]}(P) \\
&& f=F_h-F_r.
\eeqa
\end{subequations}
With $Q_a=0$, which results from $u_a=0$, \eq{thesixbreg} reduces to \eq{thesixb}. Let us assume that $u_a$ is set positive only when $\bmuh\in\bmcalU_+$ because it is the case in many practical hydraulic circuits. In addition, we assume that $u_{tr}>0$ for the simplicity. Under these conditions, \eq{thesixbreg} and \eq{qaua} reduce to the following:
\begin{subequations}\dfeq{thesixbregp}
\beqa
&& \hat{A} v_a-v +\uh_{ph}\calR(A_hP_c-F_h)\in \calN_{[0,F_{hM}]}(F_h) \dfeq{arvvad}   \\
&& v - v_a -\uh_{tr}\calR(F_r)\in \calN_{[0,F_{rM}]}(F_r) \dfeq{vvad}  \\
&&A_h \uh_{ph}\calR(A_hP_c-F_h) \in\calN_{(-\infty,P_c]}(P)\dfeq{ahphpc} \\
&& Q \in U_b\calR(P) + A_h \uh_{ph}\calR(A_hP_c-F_h)+\calN_{(-\infty,P_M]}(P) \dfeq{qinub}\\
&& f=F_h-F_r \\
&& v_a = \uh_a \max(\calR(F_r-\hat{A}F_h,0)) \dfeq{vau}
\eeqa
\end{subequations}
where $v_a \define Q_a/A_r$, $\uh_a\define c_au_a/A_r^{3/2}$ and $\hat{A}=A_r/A_h$. 

\par

By carefully observing \eq{thesixbregp}, one can see that $v_a>0$ implies $F_r>0$ from \eq{vau}, which implies $v_a\leq v$ from \eq{vvad}. Its contraposition is that $v_a>v$ implies $v_a=0$. Therefore, \eq{thesixbregp} imposes the condition $0\leq v_a\leq\max(0,v)$. When $v<0$, the solutions are obviously $v_a=0$ and $f\in\Gamma(v)$. Therefore, hereafter we consider only the case $v>0$, in which $v_a>0$ may take place. By using the functions defined in \eq{defGammah} and \eq{defGammar}, the first four equations of \eq{thesixbregp} can be rewritten as $F_h\in\Gamma_h(v-\hat{A}v_a)$ and $F_r\in\Gamma_r(v-v_a)$ with $P_c$ and $P$ being eliminated. Considering the definitions of $\Gamma_h$ and $\Gamma_r$, because of the conditions $v-v_a\geq 0$, $v>0$, $u_{tr}>0$ and $\hat{A}\in(0,1)$, $\Gamma_h$ and $\Gamma_r$ are always single-valued and thus can be replaced by $\Gamma_{h+}$ and $\Gamma_{r+}$, respectively. Therefore, \eq{thesixbregp} can be rewritten as follows:
\begin{subequations}\dfeq{dfa}
\beqa
&& \Xi_v(v,v_a)\in -\calN_{[0,\infty)}(v_a)     \dfeq{thesolfva}\\
&& f=\Gamma_{h+}(v-\hat{A}v_a)-\Gamma_{r}(v-v_a)\dfeq{fgamm}
\eeqa
\end{subequations}
where 
\beqa
\Xi_v(v,v_a)&\define& \calS(v_a) - \uh_a^2(\Gamma_{r+}(v-v_a) -\hat{A}\Gamma_{h+}(v-\hat{A}v_a)).
\eeqa
This expression can be seen as an algebraic problem regarding $\{f,v_a\}$ with a given $v$. 

\par

The function $\Xi_v(v,v_a)$ is an increasing function of $v_a$ and it satisfies $\Xi_v(v,v)>0$ because of $\Gamma_{r+}(0)=0$. With the algebraic constraint \eq{thesolfva}, $\Xi_v(v,0)\geq 0$ implies that the solution is $v_a=0$, Meanwhile, if $\Xi_v(v,0)<0$, the solution can be found within the region $v_a\in[0,v]$ by simple root-finding schemes. Once the solution $v_a$ is obtained, $f$ is obtained by \eq{fgamm}.

\par

In conclusions, $f$ and $v_a$ satisfying \eq{dfa} are obtained by the following functions:
\beqa
f\in\Gamma_{\mathrm{reg}}(v)&\define& \lb\{\barr{lr}
\Gamma(v) & \mbox{if } u_a =0\,\vee\,\bmuh\not\in\bmcalU_+\,\vee\,v\leq 0\,\vee\,\Xi_v(v,0)\geq 0\\
\Gamma_{h+}(v-\hat{A}\widehat{v}_a(v))-\Gamma_{r+}(v-\widehat{v}_a(v))  \hspace{-4cm}   & \mbox{otherwise}  \\
\earr\rb.
\\
v_a=\widehat{v}_a(v) &\define&
\lb\{\barr{lr}
0 & \mbox{if } u_a =0\,\vee\,\bmuh\not\in\bmcalU_+\,\vee\,v\leq 0\,\vee\,\Xi_v(v,0)\geq 0 \\
\mbox{FindRoot}(\Xi_v(v,\bullet),[0,v]) \hspace{-2.5cm} & \mbox{otherwise.} 
\earr\rb.
\eeqa
Here, ``$\mbox{FindRoot}$" is a function that finds a root of the argument function within the range specified by the second argument. This computation can be performed with a common iterative method, such as the Bisection method or the false position method, because $\Xi_v(v,v_a)$ is continuous and monotonic. It may also be possible to find analytical methods because $\Xi_v(v,v_a)$ is only a piece-wise parabolic function. The function $\Gamma_{\mathrm{reg}}$ can be seen as an extension of the quasistatic map $\Gamma$ given in Section~\ref{ss:vfrel}.

\par

\begin{figure}[t!]\begin{center}
\includegraphics[scale=0.8]{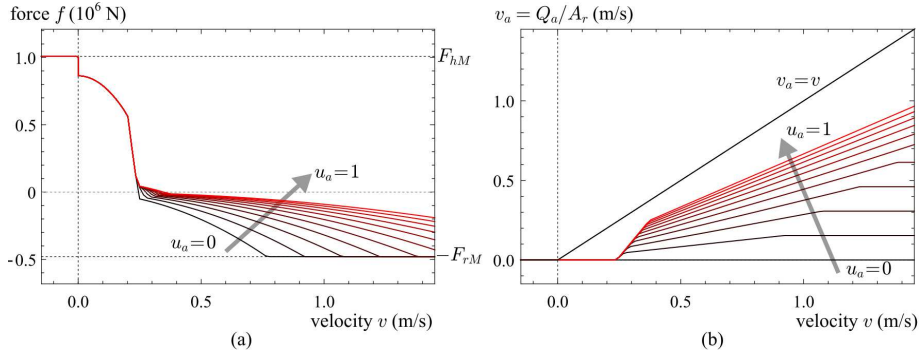}\\[-4pt]
\caption{Numerical examples of the quasistatic map $f\in\Gamma_{\mathrm{reg}}(v)$ and $v_a=\widehat{v}_a(v)$ of the circuit of Fig.~\ref{fg:reg_circuit}, which includes a regeneration pipeline. The parameters are set the same as in Section~\ref{ss:numestt} except $C_a=0.6$ and $a_a=0.0001$~m$^2$. The valve openings are set as $u_c=0.5$ (see~\eq{ucgiven}) and $u_b=0.2$.}\label{fg:reg_nume_stt}
\end{center}\end{figure}

Some numerical examples of $\Gamma_{\mathrm{reg}}$ and $\widehat{v}_a$ are presented in Fig.~\ref{fg:reg_nume_stt}. It can be seen that, as $u_a$ increases (i.e., as the regeneration valve opens), the extending velocity of the rod tends to increase especially in the region $f<0$, i.e., under stretching external forces. It can also be seen that the flowrate $Q_a$ through the regeneration valve increases as $u_a$ and $v$ increase. These features are consistent with those of actual hydraulic actuators.


\subsection{Incorporation into Simulators}\label{ss:regdyn}

As has been discussed in Section~\ref{sc:dyn}, multibody simulations involving the quasistatic map $f\in\Gamma(v)$ require its correspondent $\Lambda$ defined by \eq{deflambda}. In the same manner, now we consider obtaining the function $\Lambda_{\mathrm{reg}}$ that satisfies the following:
\beqa
\beta v+\fb \in \Gamma_{\mathrm{reg}}(v) &\iff & v = \Lambda_{\mathrm{reg}}(\beta,\fb).
\eeqa
To this end, we consider the algebraic problem \eq{dfa} with $f$ being replaced by $f=\beta v+\fb$ and $v$ being treated as an unknown, which is written as follows:
\begin{subequations}\dfeq{casenormalb}
\beqa
\Xi_v(v,v_a) &\in& -\calN_{[0,\infty)}(v_a) \dfeq{caff}\\
\Xi_f(v,v_a) &=& 0 \dfeq{caffb}
\eeqa
\end{subequations}
where 
\beqa
\Xi_f(v,v_a)&\define& \beta v+ \fb - \Gamma_{h+}(v-\hat{A}v_a) + \Gamma_{r+}(v-v_a).
\eeqa
The expression \eq{casenormalb} is an algebraic problem regarding $\{v,v_a\}$ with a given $\fb$. 

\par

The problem \eq{casenormalb} is illustrated in Fig.~\ref{fg:reg_prob}, in which the sets of $\{v,v_a\}$ satisfying \eq{caff} and \eq{caffb} are shown as curves. The functions $\Xi_v(v_a,v)$ and $\Xi_f(v_a,v)$ are increasing with respect to $v_a$ and $v$, respectively. The region $0\leq v_a<v$ should be searched for the solution $\{v,v_a\}$. Because $f=\beta v+\fb \in[-F_{rM},F_{hM}]$, one can see that the solution exists in the region $(-F_{rM}-\fb)/\beta<v<(F_{hM}-\fb)/\beta$. In addition, the solution $v$ of $\Xi_f(v,v_a)=0$ with $v_a>0$ is larger than $\Lambda(\beta,\fb)$, which is the solution with $v_a=0$. Therefore, the trapezoidal area in Fig.~\ref{fg:reg_prob} is the region that must be searched for the solution $\{v,v_a\}$.

\par

As shown in Fig.~\ref{fg:reg_prob}(a), if $\{\Lambda(\beta,\fb),0\}$, which is an analytically obtained initial guess, resides in the region $\Xi_v(v,v_a)<0$, it is the solution of the problem \eq{casenormalb}. Otherwise, the solution can be searched for iteratively, as illustrated in Fig~\ref{fg:reg_prob}(b), by alternately searching in $v_a$ and $v$-directions from the initial guess $\{\Lambda(\beta,\fb),0\}$. The following is an algorithm to obtain the solution:
\begin{subequations}
\beqa
&& \mbox{Function }\Lambda_{\mathrm{reg}}(\beta,\fb) \\ 
&& \quad v := \Lambda(\beta,\fb) \\ 
&& \quad \mbox{If } u_a>0\,\wedge\,\bmuh\in\bmcalU_+\,\wedge\,v > 0 \,\wedge\, \Xi_v(v,0)<0 \  \mbox{Then }\\
&& \quad\quad v_a := 0 \\
&& \quad\quad \mbox{Loop} \\ 
&& \quad\quad \quad v_a := \mbox{FindRoot}\lb( \Xi_v(v,\bullet), [v_a,v]    \rb)\\ 
&& \quad\quad \quad \mbox{If } |\Xi_f(v,v_a)|<\varepsilon_f\ \mbox{Exit Loop} \\
&& \quad\quad \quad v   := \mbox{FindRoot}\lb( \Xi_f(\bullet,v_a), [v,(F_{hM}-\fb)/\beta] \rb)\\ 
&& \quad\quad \quad \mbox{If } |\Xi_v(v,v_a)|<\varepsilon_v\ \mbox{Exit Loop} \\
&& \quad\quad \mbox{While} \\
&& \quad \mbox{End If} \\
&& \mbox{Return } v.
\eeqa
\end{subequations}
Here, $\varepsilon_f$ and $\varepsilon_v$ are very small positive numbers. This algorithm can be seen as an extension of $\Lambda(\beta,\fb)$ given in Section~\ref{ss:Lambda}. As is the case with $\Gamma_{\mathrm{reg}}$, the computation of ``\mbox{FindRoot}" can be performed with common iterative solvers. Some analytical methods may be found because $\Xi_f$ is also a piece-wise parabolic function.

\begin{figure}[t!]\begin{center}
\includegraphics[scale=0.9]{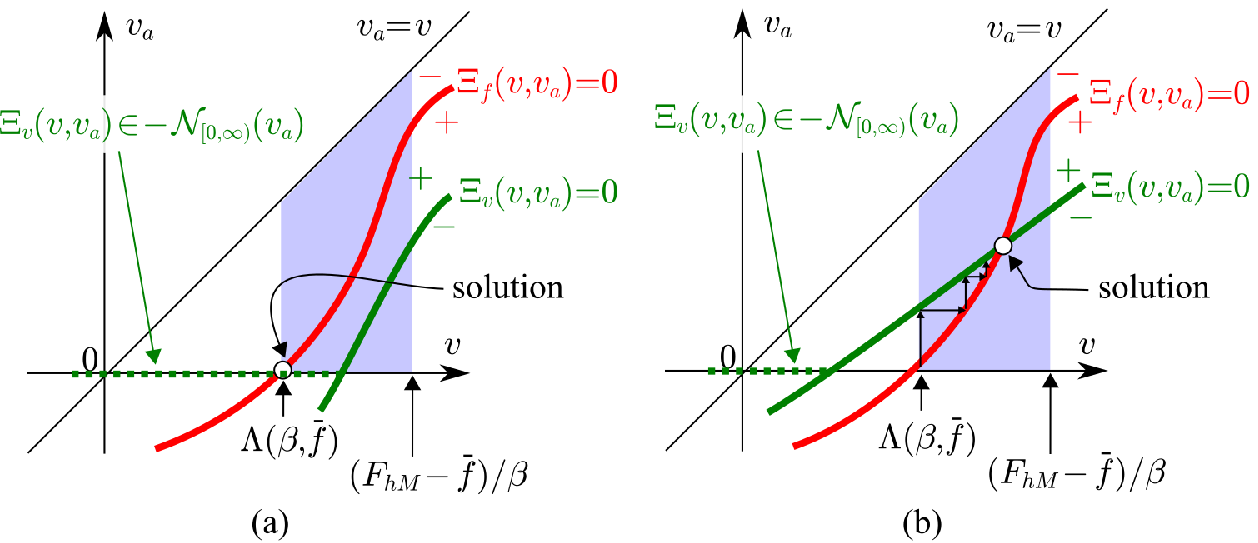}\\[-4pt]
\caption{Solution of the problem \eq{casenormalb} in (a) the case of $\Xi_v(\Lambda(\beta,\fb),0)\geq 0$, where $\{v,v_a\}=\{\Lambda(\beta,\fb),0\}$ is the solution, and (b) the case of $\Xi_v(\Lambda(\beta,\fb),0)<0$, where iterative computation is needed.}\label{fg:reg_prob}
\vspace{20pt}
\includegraphics[scale=0.8]{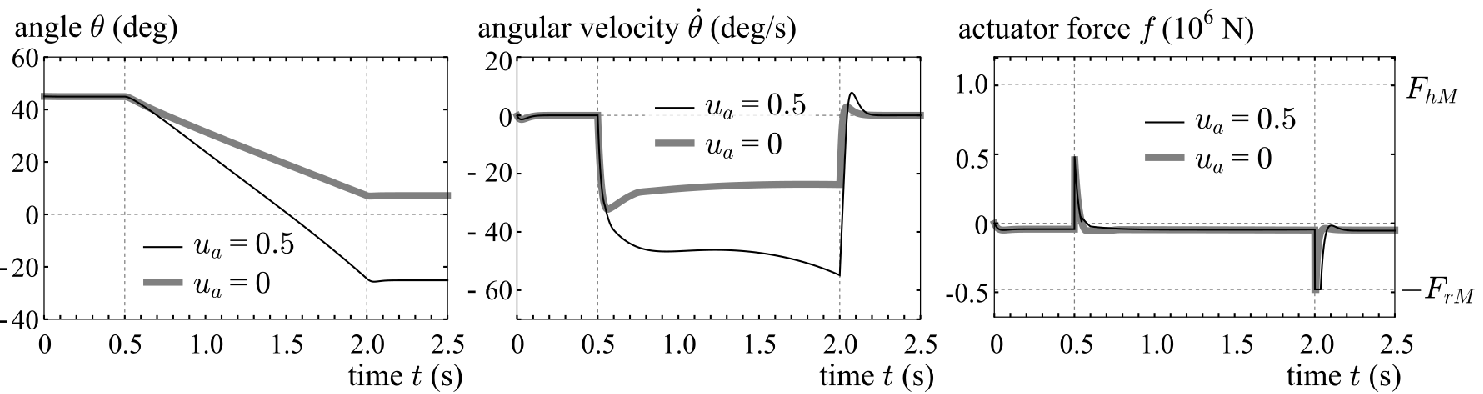}\\[-4pt]
\caption{Simulation results of the system of Fig.~\ref{fg:dynamic_sys} with the regeneration valve being open ($u_a=0.5$) and closed ($u_a=0$).}\label{fg:reg_nume_dyn}
\end{center}\end{figure}

\par

Some simulations were performed with the same arm system as in Section~\ref{ss:numedyn} and Fig.~\ref{fg:dynamic_sys}. The arm was moved down from $\theta=30$~deg for $1.5$~s with the command $u_c=0.5$ with the regeneration valve being open ($u_a=0.5$) and closed ($u_a=0$). The results are shown in Fig.~\ref{fg:reg_nume_dyn}. It shows that $u_a>0$ realizes a faster extending motion, which is consistent with the effects of actual regeneration circuits. In addition, it can be seen that the impulsive actuator force at the beginning of the motion (around $t=0.5$~s) is smaller with $u_a>0$. It can be explained by the fact that the open regeneration valve allows more flow of the oil, resulting in a softer response. The impulses at the end of the movement (around $t=2$~s) should not be directly compared because the angle and the velocity at this time are quite different between the two conditions.

\section{Extension 2: Multiple Actuators Driven by One Pump}\label{sc:mul}
\newcommand\Gammawh{\widehat{\Gamma}}
\newcommand\Lambdawh{\widehat{\Lambda}}
\newcommand\bmbeta{\bm{\beta}}
\newcommand\bmGamma{\bm{\Gamma}}
\newcommand\bmLambda{{\bm{\Lambda}}}

\subsection{Quasistatic Model}\label{ss:mulstt}

In some excavators, more than one actuators are actuated by a single pump as illustrated in Fig.~\ref{fg:mul_circuit}. In such a circuit, the behaviors of the actuators influence each other, e.g., the movement of one actuator may decrease the supplied flowrate to other actuators. This section presents an extension of the quasistatic model to deal with such systems.

\par

In circuits like Fig.~\ref{fg:mul_circuit}, the actuators are connected to a junction at which the pressure is $P$, and the total supplied flowrate $Q$ from the pump is equal to the sum of the supplied flowrates to all actuators plus that discharged through the bleed valve and the pump relief valve. On the other hand, the actuator model developed in Section~\ref{ss:vfrel} assumes that the supplied flowrate $Q$ from the pump is the input to be given. Therefore, one can see that it is convenient to have a modified version of the quasistatic map $\Gamma$ of which the input is the pressure $P$ instead of the flowrate $Q$.

\begin{figure}[t!]\begin{center}
\includegraphics[scale=1.2]{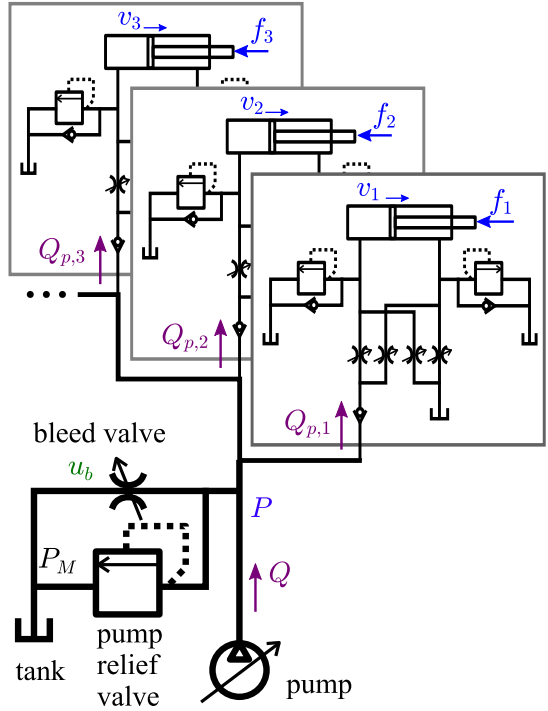}\\[-4pt]
\caption{Multiple actuators driven by one pump.}\label{fg:mul_circuit}
\end{center}\end{figure}

With a close look at the definition \eq{Gammapp} of the function $\Gamma$ in Section~\ref{ss:vfrel}, one can see that a $P$-input version of the quasistatic map $\Gamma$ can be obtained by replacing $P_m$ by $P$ and removing the segments depending on $Q$. Specifically, the modified version $\Gammawh(P,v)$, which maps the velocity $v$ to the force $f$ of an actuator depending on $P$, can be given as follows:
\begin{subequations}\dfeq{Gammappbb}
\beqa
f&\in&\Gammawh(P,v)\define\gsgn(\Gammawh_-(P,v),v,\Gammawh_+(P,v)) \dfeq{fGammawhPv}
\eeqa
where
\beqa
\Gammawh_+(P,v)&\define&\max(\min( \max(\Gamma_{+0a}(v),\Gamma_{+0b}(v)),
\non\\&&\quad
 \max(\Gammawh_{+2a}(P,v),\Gammawh_{+2b}(P,v))),\Gamma_{+3}(v),-F_{rM}) \dfeq{defgammaposm} \quad\\
\Gammawh_-(P,v)&\define&\min(\max(\min(\Gamma_{-0a}(v),\Gamma_{-0b}(v)),\non\\
&&\quad \min(\Gammawh_{-2a}(P,v),\Gammawh_{-2b}(P,v))),\Gamma_{-3}(v), F_{hM}) \quad\\
\Gammawh_{+2a}(P,v)&\define&A_hP-\dfrac{\calS(v)}{\uh_{ph}^2} -\dfrac{\calS(v)}{\uh_{tr}^2} \\
\Gammawh_{+2b}(P,v)&\define&A_hP-\dfrac{\calS(v)}{\uh_{ph}^2} -F_{rM} \\
\Gammawh_{-2a}(P,v)&\define&-A_rP-\dfrac{\calS(v)}{\uh_{pr}^2}-\dfrac{\calS(v)}{\uh_{th}^2} \\
\Gammawh_{-2b}(P,v)&\define&-A_rP-\dfrac{\calS(v)}{\uh_{pr}^2}+F_{hM}.
\eeqa
\end{subequations}
Here, the functions without hats are those defined in \eq{Gammapp}. In the same manner as \eq{gammazero}, we also have the following:
\begin{subequations}\dfeq{gammazeroP}
\beqa
\Gammawh_+(P,0) =\Gammawh_{h+}(P,0)-\Gammawh_{r+}(P,0),\quad \Gammawh_-(P,0)=\Gammawh_{h-}(P,0)-\Gammawh_{r-}(P,0)
\eeqa
where
\beqa
\Gammawh_{h+}(P,0) &=&\lb\{\barr{ll}
\min(F_{hM},A_hP) &\mbox{if }  \uh_{ph}> 0   \\
0 &\mbox{if } \uh_{ph}=0  \\
\earr\rb.
\\
\Gammawh_{r+}(P,0) &=&\lb\{\barr{ll}
0      &\mbox{if }  \uh_{tr}> 0  \\
F_{rM} &\mbox{if }    \uh_{tr}=0  \\
\earr\rb.
\\
\Gammawh_{h-}(P,0) &=&\lb\{\barr{ll}
0      &\mbox{if }  \uh_{th}>0  \\
F_{hM} &\mbox{if }  \uh_{th}=0 
\earr\rb. 
\\
\Gammawh_{r-}(P,0) &=&\lb\{\barr{ll}
\min(F_{rM},A_rP_M) &\mbox{if } \uh_{pr}>0 \\
0                   &\mbox{if } \uh_{pr}=0. \\
\earr\rb. 
\eeqa
\end{subequations}

\par

The flowrate $Q_p$ into an actuator is determined by the pressure $P$ at the junction and the pressure $P_h$ or $P_r$ of the chamber connected to the pump, specifically, as follows:
\beqa
Q_p = A_h \uh_{ph}\max(\calR(A_hP-F_h),0) + A_r \uh_{pr}\max(\calR(A_rP-F_r),0).
\eeqa
If $\bmuh\in\bmcalU_+$, we have $F_h=f+F_r$ and $F_r = \proj_{[0,F_{rM}]} ( \calS(v)/\uh_{tr}^2 )$. If $\bmuh\in\bmcalU_-$, we have $F_h = \proj_{[0,F_{hM}]}(\calS(v)/\uh_{th}^2)$ and $F_r = F_h - f$. Therefore, $Q_p$ is obtained as follows:
\beqa
Q_p = \widehat{Q}_p(P,v,\Gammawh(P,v)) \dfeq{Qpgmm}
\eeqa
where
\beqa
\widehat{Q}_p(P,v,\Gammawh(P,v))
 &\define& \lb\{\barr{ll}
 A_h \uh_{ph}\max(\calR(A_hP-\proj_{[0,F_{rM}]}( \calS(v)/\uh_{tr}^2 )-f),0) &\mbox{if }\bmuh\in\bmcalU_+ \\
 0                                                                           &\mbox{if }\bmuh\in\bmcalU_0 \\
 A_r \uh_{pr}\max(\calR(A_rP-\proj_{[0,F_{hM}]}(-\calS(v)/\uh_{th}^2 )+f),0) &\mbox{if }\bmuh\in\bmcalU_-. \\
\earr\rb.
\dfeq{qpah}
\eeqa
The function $\Gammawh$ appearing in \eq{Qpgmm} may be set-valued at $v=0$, but a careful observation of its limits of both sides of zero shows that $\widehat{Q}_p(P,0,\Gammawh(P,0))=0$, single-valued, under all three conditions, $\bmcalU_+$, $\bmcalU_0$ and $\bmcalU_-$.

\par

By using the function $\widehat{Q}_p$, the algebraic constraint between the total flowrate $Q$ from the pump and the pressure $P$ can be described as follows:
\begin{subequations}\dfeq{Qttvf}
\beqa
\Xi_P(P) \in   \calN_{(-\infty,P_M]}(P) \dfeq{Qttvfinq}
\eeqa
where
\beqa
\Xi_P(P) &\define& Q - U_b\calR(P)  - \sum_{j=1}^N \widehat{Q}_{p,j}(P,v_j,\Gammawh_j(P,v_j)).
\eeqa
\end{subequations}
Here, the symbols with the subscript $j$ stand for those associated with the $j$th actuator, and $N$ denotes the number of actuators. The value of $\Xi_P(P)$ can be interpreted as the flowrate from the pump relief valve (see Fig.~\ref{fg:mul_circuit}), which is the difference between the oil supply from the pump and the sum of the oil supplies to all actuators plus that to the bleed valve. As long as $P<P_M$, \eq{Qttvfinq} reduces to $\Xi_P(P)=0$, which means that the pump relief valve is closed. When $P=P_M$, \eq{Qttvfinq} reduces to $\Xi_P(P)\geq 0$, which means that the oil is discharged from the pump relief valve, of which the pressure limit is $P_M$. The pressure $P$ can be found by solving the algebraic problem \eq{Qttvf} as follows:
\beqa
P&=& \lb\{\barr{ll} P_M &\mbox{if }  \Xi_P(P_M)\geq 0  \\ 
\mbox{FindRoot}(\Xi_P(\bullet),[0,P_M])  &\mbox{otherwise.}\earr\rb. \dfeq{Pobt}
\eeqa
This ``FindRoot" is also easy because of the monotonicity of the function $\Xi_P$.

\par

By using the value of $P$ obtained by \eq{Pobt}, the quasistatic relation between the forces $\bmf=[f_1,\cdots,f_N]^T$ and the velocities $\bmv=[v_1,\cdots,v_N]^T$ of $N$ actuators are written in the following form: 
\beqa
\bmf \in \bmGamma_{\mathrm{mul}}(\bmv) \define \lb[ \Gammawh_1(P,v_1), \cdots, \Gammawh_N(P,v_N)\rb]^T \dfeq{gmmwh}
\eeqa
where $P$ is the one obtained by \eq{Pobt}.

\par

Some numerical examples are shown in Fig.~\ref{fg:mul_nume_stt}. In these examples, two identical actuators 1 and 2 share a single pump. The parameters of the actuators are the same as those in Section~\ref{ss:numestt}. The force $f_1$ obtained by the map $[f_1,f_2]^T\in\bmGamma_{\mathrm{mul}}([v_1,v_2]^T)$ according to the variable $v_1$ and some fixed values of $v_2$ are shown in Figs.~\ref{fg:mul_nume_stt}(a) and (d). Intermediate values $P$ and $\Xi_P(P)$, which are the immediate output of the root finding in \eq{Pobt}, are presented in Figs.~\ref{fg:mul_nume_stt}(b)(c) and (e)(f). It can be seen that an increased speed $v_2$ of the actuator 2 results in a decreased speed $v_1$ of the actuator 1, a decreased pressure $P$ at the junction, and a decreased flowrate $\Xi_P(P)$ from the pump relief valve, which are consistent with what can happen in real hydraulic circuits.

\begin{figure}[t!]\begin{center}
\includegraphics[scale=0.8]{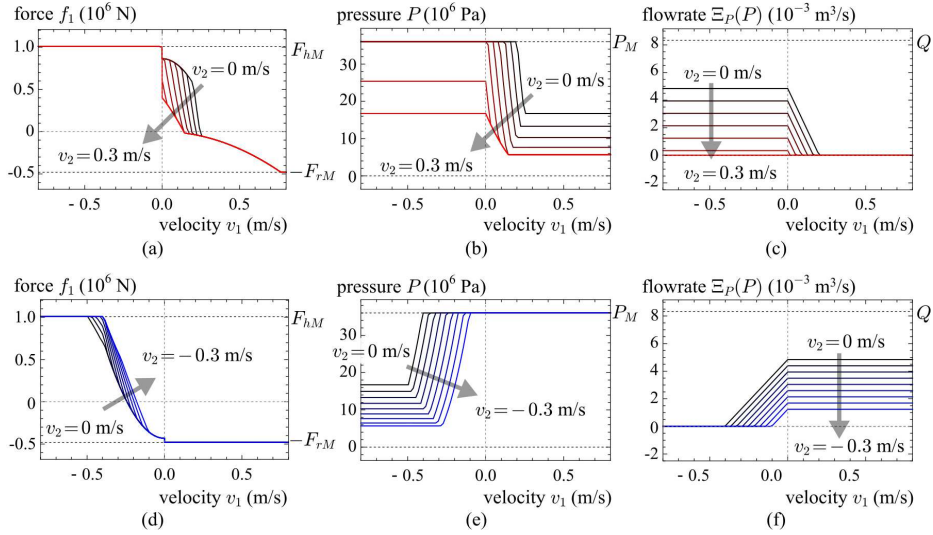}\\[-4pt]
\caption{Numerical examples regarding the quasistatic map $[f_1,f_2]^T\in\bmGamma_{\mathrm{mul}}([v_1,v_2]^T)$ of a circuit including two identical actuators driven by a single pump. Positive commands $u_{c,1}=u_{c,2}=0.5$ are given in (a)-(c) and negative commands $u_{c,1}=u_{c,2}=-0.5$ are given in (d)-(f) (see \eq{ucgiven}).  }\label{fg:mul_nume_stt}
\end{center}\end{figure}

\subsection{Incorporation into Simulators}\label{ss:muldyn}

In the same way as in Section~\ref{sc:dyn} and in Section~\ref{ss:regdyn}, the application of the quasistatic map $\bmGamma_{\mathrm{mul}}$ to multibody simulations requires a map $\bmLambda_{\mathrm{mul}}$ that satisfies the following relation:
\beqa
\diag{[}\bmbeta]\bmv+\bmfb\in\bmGamma_{\mathrm{mul}}(\bmv) &\iff& \bmv \in\bmLambda_{\mathrm{mul}}(\bmbeta,\bmfb) \dfeq{lbmlkfe00}
\eeqa
where $\bmbeta$, $\bmv$ and $\bmfb$ are $N$-dimenional vectors and the elements of $\bmbeta$ are all positive. The structure \eq{gmmwh} of $\bmGamma_{\mathrm{mul}}$ suggests that $\bmLambda_{\mathrm{mul}}$ has the following structure:
\beqa
\bmLambda_{\mathrm{mul}}(\bmbeta,\bmfb) = [ \Lambdawh_1(P,\beta_1,\fb_1), \cdots, \Lambdawh_N(P,\beta_N,\fb_N)]^T \dfeq{lbmdfe}
\eeqa
where $\Lambdawh_j(P,\beta_j,\fb_j)$ are those that satisfy the following:
\beqa
\beta_j v_j + \fb_j \in\Gammawh_j(P,v_j) \iff v_j\in\Lambdawh(P,\beta_j,\fb). \dfeq{deflambdam}
\eeqa
Recalling that $\Lambda$ is obtained as \eq{defvvvv} and \eq{defLambdadetail}, one can obtain $\Lambdawh_j$ as follows:
\begin{subequations}\dfeq{Lambdawhdef}
\beqa
\Lambdawh(P,\beta,\fb) 
&=&
 \lb\{\barr{ll}
 v_{hM} & \mbox{if } \bar{f}\geq F_{hM}  \wedge \bmu\in\bmcalU_+ \\
 \max(\min(\max(v_{+0a},v_{+0b}), \max(\widehat{v}_{+2a},\widehat{v}_{+2b})),v_{+3},v_{rM}) \hspace{-5cm}&\\& \mbox{if } \bar{f}\leq  \Gammawh_+(P,0) \wedge \bmu\in\bmcalU_+ \\
 v_{rM} & \mbox{if } \fb\leq -F_{rM} \  \wedge\  \bmu\in\bmcalU_-\\
 \min(\max(\min(v_{-0a},v_{-0b}),  \min(\widehat{v}_{-2a},\widehat{v}_{-2b})),v_{-3},v_{hM}) \hspace{-5cm}&\\& \mbox{if } \fb\geq\Gammawh_-(P,0)\  \wedge\  \bmu\in\bmcalU_- \\
 0   &\mbox{if } \Gammawh_+(P,0)<\bar{f}<\Gammawh_-(P,0)
\earr\rb. 
\eeqa
where
\beqa
\widehat{v}_{+2a}&\define& \Phi_A\lb(\beta ,\fb-A_hP, \psi(\uh_{ph},\uh_{tr})\rb)\\
\widehat{v}_{+2b}&\define& \Phi_A\lb(\beta ,\fb-A_hP+F_{rM}, \uh_{ph}^2\rb)\\      
\widehat{v}_{-2a}&\define& \Phi_A\lb(\beta,\fb+A_rP       ,\psi(\uh_{pr},\uh_{th})\rb)\\ 
\widehat{v}_{-2b}&\define& \Phi_A\lb(\beta,\fb+A_rP-F_{hM},\uh_{pr}^2             \rb).
\eeqa
\end{subequations}
Here, the subscript $j$ denoting the actuator indices are omitted and the symbols without hats are those defined in \eq{defvvvv}.

\par

\begin{figure}[t!]\begin{center}
\includegraphics[scale=0.8]{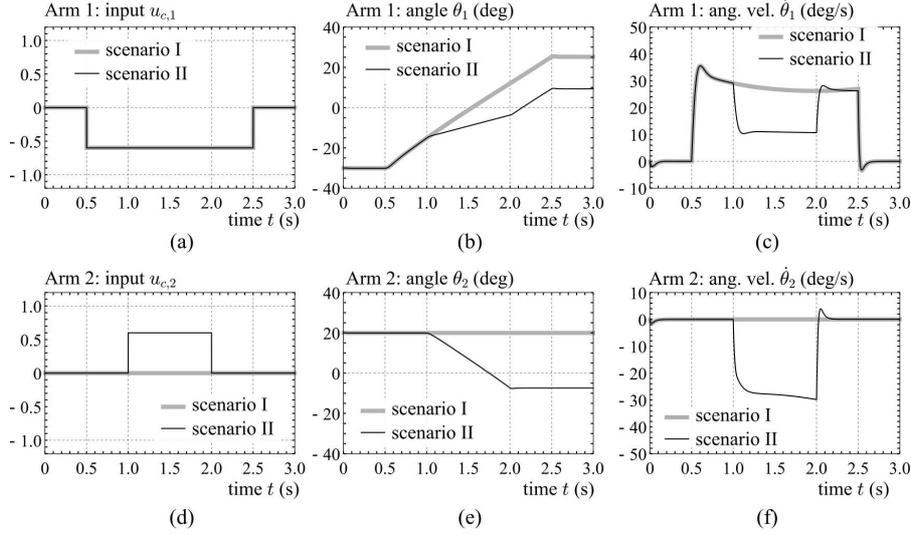}\\[-4pt]
\caption{Simulation results of two arms identical to the one in Fig.~\ref{fg:dynamic_sys} driven by a single pump. No external forces were applied to either arm. The command $u_{c,1}$ sent to the arm~1 was the same between the two scenarios, while the arm 2 was actuated only in Scenario~\RmN{2}. }\label{fg:mul_nume_dyn}
\end{center}\end{figure}

In the expression \eq{lbmdfe}, the pressure $P$ can be obtained in the same line of thought as in Section~\ref{ss:mulstt}, but the problem \eq{Qttvf} needs to be modified because $v_j$ are unknowns. Considering that $\bmfb$ is given, the problem is redefined as
\begin{subequations}\dfeq{Qtt}
\beqa
\Xi_{P\Lambda}(P)  \in \calN_{(-\infty,P_M]}(P) 
\eeqa
where
\beqa
\Xi_{P\Lambda}(P) &\define& Q - U_b\calR(P) - \sum_{j=1}^N \widehat{Q}_{p,j}(P,\Lambdawh_j(\beta_j,\fb_j),\fb_j+\beta_j \Lambdawh_j(\beta_j,\fb_j)).
\eeqa
\end{subequations}
The solution $P$ of the problem \eq{Qtt} can be written as follows:
\beqa
P&=& \lb\{\barr{ll} P_M &\mbox{if }  \Xi_{P\Lambda}(P_M)\geq 0  \\ 
\mbox{FindRoot}(\Xi_{P\Lambda}(\bullet),[0,P_M])  &\mbox{otherwise.}\earr\rb. \dfeq{fdafeoo}
\eeqa
In conclusion, the function $\bmLambda_{\mathrm{mul}}$ satisfying \eq{lbmlkfe00} is obtained as \eq{lbmdfe} in which $\Lambdawh_j$ are those defined by \eq{Lambdawhdef} and $P$ is replaced by \eq{fdafeoo}.

\par

Some simulations were performed with a system composed of two arms identical to the one in Fig.~\ref{fg:dynamic_sys} connected with a single pump. It was implemented with the function $\bmLambda_{\mathrm{mul}}$ with $N=2$. The parameters of the arms were set the same as in Section~\ref{ss:numedyn}, and the parameters of the actuators and the circuit were set the same as in Section~\ref{ss:mulstt}. No external forces were applied to either arm. The results are shown in Fig.~\ref{fg:mul_nume_dyn}. The initial postures of the arms 1 and 2 were set as $\theta=-30$~deg and $\theta=20$~deg, respectively. In the first scenario, the arm 1 was driven upward by the command $u_{c,1}=-0.6$ from $t=0.5$~s to $2.5$~s. while the arm 2 was not driven. In the second scenario, the command to the arm 1 was the same as that in the first scenario, the arm 2 was driven downward from $t=1$~s to $2$~s. As can be seen in Figs.~\ref{fg:mul_nume_dyn}(b) and (c), the actuation of the arm 2 affected the motion of the arm 1, i.e., the arm 1 was decelerated when the arm 2 was actuated. It can be explained by the effect of the arm 2, which suctioned the oil when actuated, causing the shortage of the oil flow into the arm 1.

\section{Conclusions}\label{sc:ccl}

This article has presented a quasistatic model of a hydraulic actuator driven by a four-valve independent metering circuit. The presented model is described as a nonsmooth map between the velocity and the force. The model is derived from the algebraic constraint between the flowrate and the pressure at every valve in the circuit in the steady state. This article also presents an approach to incorporate the quasistatic model into the multibody dynamics simulators, in which the hydraulic model is connected to rigid bodies through virtual viscoelastic elements. In addition, the proposed model is extended to include a regeneration pipeline and to deal with a collection of actuators driven by a single pump.

\par

In multibody simulations employing the presented quasistatic model, the transient responses are determined by the virtual viscoelastic elements, which are parameterized by the stiffness $K$ and the viscosity $B$. In reality, transient responses are governed by many factors such as the compressibility of the oil, the inertia of the actuator and the oil, and the compliance of the pipes. Some tuning guidelines for the parameters $K$ and $B$ should be sought to reproduce the behaviors of actual systems by the presented simulation framework. In order to accelerate the computation, analytical methods for the root-finding routines that have appeared in the proposed algorithm in Sections~\ref{sc:reg} and \ref{sc:mul} should also be addressed. Integration of the schemes of Sections~\ref{sc:reg} and \ref{sc:mul}, i.e., multiple actuators with regeneration pipelines driven by a single pump, is also an open problem.


\end{document}